\documentclass[11pt]{article}

\usepackage{amsmath}
\usepackage{amsthm}
\usepackage{amssymb}
\usepackage{titlesec}
\usepackage[thinlines,thiklines]{easybmat}

\usepackage{graphicx}
\theoremstyle{definition}\newtheorem{Df}{Definition}
\theoremstyle{plain}\newtheorem{Th}{Theorem}
\theoremstyle{definition}\newtheorem{Rm}{Remark}
\theoremstyle{definition}
\theoremstyle{plain}
\theoremstyle{plain}\newtheorem{Co}[Th]{Corollary}
\theoremstyle{plain}\newtheorem{Lm}[Th]{Lemma}

\textwidth 155mm
\textheight 225mm \footskip 8mm
\parskip 2mm
\oddsidemargin 0pt \evensidemargin 0pt \hoffset 0.3in \topskip 0pt
\voffset -16mm

\begin{document}
\title{ State succinctness of two-way finite automata with quantum and classical states \footnote{This work is supported in
part by the National Natural Science Foundation of China (Nos.
60873055, 61073054,61100001), the Natural Science Foundation of Guangdong
Province of China (No. 10251027501000004), the Fundamental
Research Funds for the Central Universities (Nos.
10lgzd12,11lgpy36), the Research Foundation for the Doctoral
Program of Higher School of Ministry of Education (Nos.
20100171110042, 20100171120051) of China,  the China Postdoctoral Science
Foundation project (Nos. 20090460808, 201003375), and the project
of  SQIG at IT, funded by FCT and EU FEDER projects projects QSec
PTDC/EIA/67661/2006, AMDSC UTAustin/MAT/0057/2008, NoE Euro-NF,
and IT Project QuantTel, FCT project PTDC/EEA-TEL/103402/2008
QuantPrivTel.}}

\author{Shenggen Zheng$^{1,3,}$\thanks{{\it  E-mail
address:} zhengshenggen@gmail.com (S. Zheng)},\hskip 2mm Daowen
Qiu$^{1,2,4,}$\thanks{Corresponding author. {\it E-mail address:}
issqdw@mail.sysu.edu.cn (D. Qiu)}, \hskip 2mm Jozef Gruska$^{3}$,
\hskip 2mm Lvzhou Li$^{1}$,
 \hskip 2mm Paulo Mateus$^{2}$
 \\
\small{{\it $^{1}$Department of
Computer Science, Sun Yat-sen University, Guangzhou 510006,
  China }}\\
\small {{\it $^{2}$ SQIG--Instituto de Telecomunica\c{c}\~{o}es, Departamento de Matem\'{a}tica,}}\\
\small {{\it  Instituto Superior T\'{e}cnico, TULisbon, Av. Rovisco Pais
1049-001, Lisbon, Portugal}}\\
\small{{\it $^{3}$ Faculty of Informatics, Masaryk University, Brno, Czech Republik}}\\
\small{{\it $^{4}$ The State Key Laboratory of Computer Science, Institute of Software,}}\\
\small{{ \it Chinese  Academy of Sciences, Beijing 100080, China}}
}

\date{ }
\maketitle \vskip 2mm \noindent {\bf Abstract}
\par
{\it Two-way quantum automata with quantum and classical states}
(2QCFA) were introduced by Ambainis and Watrous in 2002. In this
paper we study state succinctness of 2QCFA.
 For any $m\in {\mathbb{Z}}^+$ and any $\epsilon<1/2$, we show that: \begin{enumerate}
                                                   \item there is a promise problem $A^{eq}(m)$ which can be solved by a 2QCFA with one-sided error $\epsilon$ in a polynomial expected running time with a constant number (that depends neither on $m$ nor
        on $\varepsilon$) of quantum states and $\mathbf{O}(\log{\frac{1}{\epsilon})}$ classical states, whereas the sizes of the corresponding {\it deterministic finite automata} (DFA),  {\it two-way nondeterministic finite automata} (2NFA) and polynomial expected running time {\it two-way probabilistic finite automata} (2PFA) are at least $2m+2$, $\sqrt{\log{m}}$, and $\sqrt[3]{(\log m)/b}$, respectively;
                                                   \item there exists a language $L^{twin}(m)=\{wcw| w\in\{a,b\}^*\}$ over the alphabet $\Sigma=\{a,b,c\}$ which can be recognized by  a 2QCFA with one-sided error $\epsilon$ in an exponential expected running time with a constant number of quantum states and $\mathbf{O}(\log{\frac{1}{\epsilon})}$ classical states, whereas the sizes of the corresponding DFA, 2NFA and polynomial expected running time 2PFA are at least $2^m$, $\sqrt{m}$, and $\sqrt[3]{m/b}$, respectively;
                                                 \end{enumerate}
 where $b$ is a constant.

\par
\vskip 2mm {\sl Keywords:}   Computing models; Quantum finite automata; State complexity; Succinctness.
\vskip 2mm

\section{Introduction}

An important way to get a deeper insight into the power of various
quantum resources and features for information processing is to
explore the power of various quantum variations of the basic
models of classical automata. Of a special interest and importance
is to do that for various quantum variations of classical
 finite automata because quantum resources are not cheap and quantum
operations are not easy to implement. Attempts to find out how
much one can do with very little of quantum resources and
consequently with the
 simplest quantum variations of classical finite automata are therefore
of a particular interest. This paper is an attempt to contribute
to such line of research.

There are two basic approaches how to introduce quantum features
to classical models of finite automata. The first one is to
consider quantum variants of the classical {\it one-way
(deterministic) finite automata} (1FA or 1DFA) and the second one
is to consider quantum variants of the classical {\it two-way
finite automata} (2FA or 2DFA). Already the very first attempts to
introduce such models, by Moore and Crutchfields \cite{Moo97} and
Kondacs and Watrous \cite{Kon97} demonstrated that in spite of the
fact that in the classical case, 1FA and 2FA have the same
recognition power, this is not so for their quantum variations (in
case only unitary operations and projective measurements are
considered as quantum operations). Moreover, already the first
important model of {\it two-way quantum finite automata} (2QFA),
namely that introduced by Kondacs and Watrous, demonstrated that
very natural quantum variants of 2FA are much too powerful - they
can recognize even some {\it non-context free languages} and are
actually not really finite in a strong sense \cite{Kon97}. It
started to be therefore of interest to introduce and explore some
``less quantum" variations of 2FA and their power
\cite{Amb06,Amb98,Amb02,Bro99,Ber03,Li06,Li08,LiQiu09,Mer06,Pas00,Qiu09,QiuYu09,Yak10,Yak10-2,Yak11}.

A very natural ``hybrid" quantum variations of 2FA, namely, {\it
two-way quantum automata with quantum and classical states}
(2QCFA) were introduced by Ambainis and Watrous \cite{Amb02}.
Using this model they were able to show, in an elegant way, that
an addition of a single qubit to a classical model can enormously
increase the power of automata. A 2QCFA is essentially a classical
2FA augmented with a quantum memory of constant size (for states
in a fixed Hilbert space) that does not depend on the size of the
(classical) input. In spite of such a restriction, 2QCFA have been
shown to be more powerful than {\it two-way probabilistic finite
automata} (2PFA) \cite{Amb02}.

State complexity and succinctness results are an important
research area of classical automata theory, see \cite{Yu98}, with
a variety of applications. Once quantum versions of classical
automata were introduced  and explored, it started to be of large
interest to find out, also through succinctness results, a
relation between the power of classical and
 quantum automata models. This has turned out to be an area of surprising
 outcomes that again indicated that the relations between
 classical and corresponding quantum automata models are intriguing. For
 example, it has been shown, see \cite{Amb98,AmYa11,Amb09,AmbNay02,Le06}, that for some languages 1QFA
 require exponentially less states than classical 1FA, but for some other
 languages it can be in an opposite way.

Because of the simplicity, elegance and interesting properties of
the 2QCFA model, as well as its natural character, it seems to be
both useful and interesting  to explore state complexity and
succinctness results of 2QCFA and this we will do in this paper.

In the first part of this paper, 2QCFA are recalled formally and
some basic notations are given. Then we will prove state
succinctness result of 2QCFA on an infinite family of promise
problems. For any $m\in {\mathbb{Z}}^+$ let
$A^{eq}_{yes}(m)=\{w\in\{a,b\}^*|w=a^mb^m\}$ and
$A^{eq}_{no}(m)=\{w\in\{a,b\}^*| w\neq a^mb^m \ {\it and} \
|w|\geq m\}$. For any $\epsilon<1/2$ ($\epsilon$ is always a
nonnegative number in this paper),
 we will prove that the promise problem $A^{eq}(m)=(A^{eq}_{yes}(m), A^{eq}_{no}(m))$ can be solved by a 2QCFA with one-sided error $\epsilon$ in a polynomial expected running time with a constant number of quantum states and $\mathbf{O}(\log{\frac{1}{\epsilon})}$ (the base of logarithm is always $2$ in this paper) classical states , whereas sizes of the corresponding DFA, 2DFA and 2NFA are at least $2m+2$, $\sqrt{\log{m}}$ and $\sqrt{\log{m}}$, respectively. We also show that
for any $m\in {\mathbb{Z}}^+$, any 2PFA solves the promise problem
$A^{eq}(m)$ with an error probability $\epsilon<1/2$ and within
polynomial expected running time has least $\sqrt[3]{(\log m)/b}$
states, where $b>0$ is a constant.  Finally, we show a state
succinctness result of 2QCFA on an infinite family of languages.
For any $m\in {\mathbb{Z}}^+$ and any $\epsilon<1/2$, there exists
a 2QCFA that recognizes language $L^{twin}(m)=\{wcw|
w\in\{a,b\}^*\}$ over the alphabet $\Sigma=\{a,b,c\}$ with
one-sided error $\epsilon$ in an exponential expected running time
with a constant number of quantum states and
$\mathbf{O}(\log{\frac{1}{\epsilon})}$ classical states . We use
lower bound of {\it communication complexity} to prove that any
DFA recognizing language $L^{twin}(m)$ has at least $2^m$ states.
Next, we prove that the sizes of the corresponding 2DFA and 2NFA
to recognize $L^{twin}(m)$ are at least $\sqrt{m}$. We also show
that for any $m\in {\mathbb{Z}}^+$, any 2PFA recognizing
$L^{twin}(m)$ with an error probability $\epsilon<1/2$ and within
polynomial expected running time has least $\sqrt[3]{ m/b}$
states, where $b>0$ is a constant.

We now outline the remainder of this paper. Definition of 2QCFA
and some auxiliary lemmas are recalled in Section 2. In Section 3
we prove a state succinctness result of 2QCFA on an infinite
family of promise problems. Then we show a state succinctness
result of 2QCFA on an infinite family of languages in Section 4.
Finally, Section 5 contains a conclusion and some open problems.

\section{Preliminaries}
 In the first part of this section we formally recall the model of
2QCFA we will use.
 Concerning the basics
of {\it quantum computation} we refer the reader to
\cite{Gru99,Nie00}, and concerning the basic properties of
automata models,  we refer the reader to
\cite{Gru99,Gru00,Hop79,Paz71,QiuLi08,Qiu12}.

\subsection{2QCFA}
2QCFA were first introduced by Ambainis and Watrous \cite{Amb02},
and then studied by Qiu, Yakaryilmaz and etc.
\cite{Qiu08,Yak10,ZhgQiu11,ZhgQiu11-2}.
 Informally, we describe a 2QCFA as a 2DFA which has an access to a quantum
 memory of a constant size (dimension), upon which it performs
quantum unitary transformations or projective measurement. Given a
finite set of quantum states $Q$, we denote by $\mathcal{H}(Q)$
the Hilbert space spanned by $Q$. Let
$\mathcal{U}(\mathcal{H}(Q))$ and $\mathcal{O}(\mathcal{H}(Q))$
denote the sets of unitary operators and projective measurements
over $\mathcal{H}(Q)$, respectively.
\begin{Df}
A 2QCFA $\mathcal{A}$ is specified by a 9-tuple
\begin{equation}
\mathcal{A}=(Q,S,\Sigma,\Theta,\delta,q_{0},s_{0},S_{acc},S_{rej})
\end{equation}
where:

 \begin{enumerate}
\item $Q$ is a finite set of quantum states;

\item $S$ is a finite set of classical states;

\item $\Sigma$ is a finite set of input symbols; $\Sigma$ is then
extended to the tape symbols set $\Gamma=\Sigma\cup\{\
|\hspace{-1.5mm}c,\$\}$, where $\ |\hspace{-1.5mm}c\notin \Sigma $
is called the left end-marker and $\$\notin \Sigma$ is called the
right end-marker;

\item $q_{0}\in Q$ is the initial quantum state;

\item $s_{0}\in S$ is the initial classical state;

\item $S_{acc}\subset S$ and $S_{rej}\subset S$ satisfying
$S_{acc}\cap S_{rej}=\emptyset$ are the sets of classical
accepting and rejecting states, respectively.

\item $\Theta$ is the transition function of quantum states:
\begin{equation}
\Theta:S\setminus(S_{acc}\cup S_{rej})\times \Gamma \rightarrow
\mathcal{U}(\mathcal{H}(Q))\cup \mathcal{O}(\mathcal{H}(Q)).
\end{equation}
Thus,
 $\Theta(s,\gamma)$ is either a
unitary transformation or a projective measurement.

\item $\delta$ is the transition function of classical states.

\begin{enumerate}

  \item [a)]  If $\Theta(s, \gamma)\in \mathcal{U}(\mathcal{H}(Q))$, then \begin{equation}
\delta:S\setminus(S_{acc}\cup S_{rej})\times \Gamma \rightarrow
S\times \{-1, 0, 1\},
\end{equation} which is similar to the transition function for 2DFA, $\delta(s,\gamma)=(s',d)$ means
that when the classical state  $s\in S$ scanning $\gamma\in
\Gamma$ is changed to state $s'$, and the movement of the tape
head is determined by $d$ (moving right one cell if $d=1$, left if
$d=-1$, and being stationary if $d=0$).

\item [b)]If $\Theta(s, \gamma)\in \mathcal{O}(\mathcal{H}(Q))$,
then we assume that $\Theta(s,\gamma)$ is a projective measurement
with a set of possible eigenvalues $R=\{r_1, \cdots, r_n\}$ and
the projectors set $\{P(r_i):i=1, \cdots, n\}$, where $P(r_i)$
denotes the projector onto the eigenspace corresponding to $r_i$.
In such a case
\begin{equation}
\delta:S\setminus(S_{acc}\cup S_{rej})\times \Gamma \times
R\rightarrow S\times \{-1, 0, 1\},
\end{equation}
 where $\delta(s,\gamma)(r_{i})=(s',d)$ means
that when the projective measurement result is $r_{i}$, the
classical state  $s\in S$ is changed to $s'$, and the movement of
the tape head is determined by $d$.
\end{enumerate}
\end{enumerate}
\end{Df}

Given an input $w$, a 2QCFA
$\mathcal{A}=(Q,S,\Sigma,\Theta,\delta,q_{0},s_{0},S_{acc},S_{rej})$
proceeds as follows: at the beginning, the tape head is positioned
on the left end-marker $|\hspace{-1.5mm}c$, the quantum initial
state is $|q_{0}\rangle$, the classical initial state is $s_{0}$.
In the next steps if the current quantum state is $|\psi\rangle$,
the current classical state is $s\in S\setminus(S_{acc}\cup
S_{rej})$ and the current scanning symbol is $\sigma\in\Gamma$,
then the quantum state $|\psi\rangle$ and the classical state $s$
will be changed according to $\Theta(s,\sigma)$ as follows:

\begin{enumerate}

\item if $\Theta(s,\sigma)$ is a unitary operator $U$, then $U$ is
applied to the current quantum state $|\psi\rangle$ changing it
into $U|\psi\rangle$, and $\delta(s,\sigma)=(s',d)\in S\times
\{-1,0,1\}$ makes the current classical state $s$ to become $s'$,
together with the tape head moving in terms of $d$. In case $s'\in
S_{acc}$, the input is accepted, and in case $s'\in Q_{rej}$, the
input rejected;

\item if $\Theta(s,\sigma)$ is a projective measurement, then the
current quantum state $|\psi\rangle$ is changed to the quantum
state $P_{j}|\psi\rangle/ \|P_{j}|\psi\rangle\|$ with probability
$\|P_{j}|\psi\rangle\|^{2}$ in terms of the measurement, and in
this case, $\delta(s,\sigma)$ is a mapping from the set of all
possible results of the measurement to $S\times \{-1,0,1\}$. For
instance, for the result $r_j$ of the measurement, and
$\delta(s,\sigma)(r_j)=(s_{j},d)$, we have
\begin{enumerate}
\item if $s_{j}\in S\setminus (S_{acc}\cup S_{rej})$, new
classical state is $s_{j}$ and the head moves in the direction
$d$; \item

if $s_{j}\in S_{acc}$, the machine accepts the input and the
computation halts;

\item  and similarly, if $s_{j}\in S_{rej}$, the machine rejects
the input and the computation halts.
\end{enumerate}
It is seen that if the current all possible classical states are
in $S_{acc}\cup S_{rej}$, then the computation for the current
input string ends.
\end{enumerate}

 The computation will end if classical state
is in $S_{acc}\cup S_{rej}$. Therefore, similar to the definition
of accepting and rejecting probabilities for 2QFA \cite{Kon97},
the accepting and rejecting probabilities $Pr[\mathcal{A}\  {\it
accepts}\  w]$ and $Pr[\mathcal{A}\ {\it rejects}\  w]$ in
$\mathcal{A}$ for input $w$ are respectively the sums of all
accepting probabilities and all rejecting probabilities before the
end of the machine for computing input $w$.

Let $L\subset \Sigma^*$ and $\epsilon<1/2$. A 2QCFA $\mathcal{A}$
recognizes $L$ with one-sided error $\epsilon$ if

\begin{enumerate}
\item[1.] $\forall w\in L$, $Pr[\mathcal{A}\  {\it accepts}\
w]=1$, and \item[2.] $\forall w\notin L$, $Pr[\mathcal{A}\ {\it
rejects}\  w]\geq 1-\epsilon$.
\end{enumerate}

\subsection{Notations and auxiliary lemmas}
 In this subsection we review some additional notations related to 2QCFA \cite{Qiu08}.
For convenience, let $2QCFA_{\epsilon}$ denote the classes of all
languages recognized by 2QCFA with a given error probability
$\epsilon$ and $2QCFA_{\epsilon}(ptime)$ denote the classes of
languages recognized in polynomial expected time by 2QCFA with a
given error probability $\epsilon$. Moreover, let
$QS(\mathcal{A})$ and $CS(\mathcal{A})$ denote the numbers of
quantum states and classical states of a 2QCFA $\mathcal{A}$ and
let $T(\mathcal{A})$ denote the expected running time of 2QCFA
$\mathcal{A}$. For a string $w$, the length of $w$ is denoted by
$|w|$.

\begin{Lm}[\cite{Amb02}]\label{Leq}
For any $\epsilon<1/2$, there is a 2QCFA $\mathcal{A}(\epsilon)$
that accepts any $w \in L^{eq}=\{a^mb^m| m\in \mathbb{N} \}$ with
certainty, rejects any $w \notin L^{eq}$ with probability at least
$1-\epsilon$ and halts in expected running time
$\mathbf{O}(|w|^4)$, where $w$ is the input.
\end{Lm}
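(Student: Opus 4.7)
The construction is the Ambainis--Watrous quantum rotation algorithm; below I outline how I would assemble its ingredients.

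\emph{Pre-check and rotation subroutine.} First use a 2DFA-style loop to verify that the input has the form $a^*b^*$; reject if not. Henceforth the input is $w=a^ib^j$ and the remaining task is to test $i=j$. Work with a single qubit, initialized in $|0\rangle$, and let $R(\alpha)$ denote the rotation of the $\{|0\rangle,|1\rangle\}$-plane by angle $\alpha$. Sweep the head from left to right, applying $R(\theta)$ on each $a$ and $R(-\theta)$ on each $b$ with $\theta=\sqrt{2}\,\pi$ (any angle making $\theta/\pi$ irrational algebraic works), and then measure. The pre-measurement state equals $R((i-j)\theta)|0\rangle$, so the probability of observing $|1\rangle$ is $\sin^2((i-j)\theta)$; we reject precisely on this outcome. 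When $i=j$ this outcome never occurs, so every $w\in L^{eq}$ is accepted with certainty, which immediately secures the one-sided error.

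\emph{Diophantine lower bound.} When $i\neq j$, the value $(i-j)\theta$ is bounded away from integer multiples of $\pi$. Since $\sqrt{2}$ is algebraic of degree two, Liouville's theorem (Thue--Siegel--Roth would also suffice) gives $|k\theta-n\pi|\geq c/|k|$ for all nonzero $k,n\in\mathbb{Z}$ and an absolute constant $c>0$, hence $\sin^2((i-j)\theta)\geq c'/(i-j)^2\geq c'/|w|^2$ for an absolute $c'>0$. Thus each sweep round rejects with probability $\Omega(1/|w|^2)$ whenever $w\notin L^{eq}$.

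\emph{Amplification via a tape random walk.} The small per-sweep rejection probability must be boosted to at least $1-\epsilon$ while using only constantly many quantum and classical states. The trick is to alternate sweep rounds with a ``halt-or-loop'' gadget: drive a symmetric random walk of the head along the tape by fair quantum coin flips (applied to an auxiliary qubit and measured), and use the end-markers as boundaries with appropriate absorption/reflection encoded in the classical states, so that a round ends in \textbf{accept} with a calibrated probability of order $1/|w|^2$ and otherwise returns to the next sweep. The walk has expected length $\mathbf{O}(|w|^2)$, so each outer round costs expected $\mathbf{O}(|w|^2)$ steps (the $\mathbf{O}(|w|)$ sweep is negligible), and the expected number of rounds is $\mathbf{O}(|w|^2)$ because the halt probability per round is $\Theta(1/|w|^2)$. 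The total expected running time is therefore $\mathbf{O}(|w|^4)$. Tuning the per-round accept probability to be a suitably small constant factor below the guaranteed per-round reject probability from the previous step drives the probability of eventually accepting, conditional on $i\neq j$, below $\epsilon$.

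\emph{Main obstacle.} The delicate point is the amplification: producing a halt probability of the correct order $\Theta(1/|w|^2)$ with only a bounded number of internal states. The resolution is to let the input tape itself serve as an implicit counter via the calibrated random walk above, whose hitting probabilities depend polynomially on $|w|$ even though the automaton has no explicit memory of $|w|$.
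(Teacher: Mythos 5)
Your proposal is correct and is essentially the Ambainis--Watrous construction that the paper cites for this lemma (and reuses almost verbatim in its own Theorem~\ref{Am}): the $\sqrt{2}\pi$ rotation per symbol with the quadratic-irrationality lower bound $\sin^2((i-j)\sqrt{2}\pi)=\Omega(1/(i-j)^2)$, the accept gadget built from end-marker-absorbed random walks plus $k$ coin flips calibrated to accept with probability $\Theta(\epsilon/|w|^2)$ per round, and the geometric-series comparison of the per-round accept and reject probabilities yielding one-sided error $\epsilon$ and expected time $\mathbf{O}(|w|^4)$. The only imprecision is minor and harmless: the walk must start adjacent to an end-marker and be run twice (with the $k$ coin flips) to push the per-round accept probability down to $O(\epsilon/|w|^2)$, and its expected duration is actually $\mathbf{O}(|w|)$ rather than $\mathbf{O}(|w|^2)$, which only improves the stated time bound.
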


\begin{Rm}
According to the proof of Lemma \ref{Leq} in \cite{Amb02}, for the
above 2QCFA $\mathcal{A}(\epsilon)$ we further have
$QS(\mathcal{A}(\epsilon))=2$, $CS(\mathcal{A}(\epsilon))\in
\mathbf{O}(\log{\frac{1}{\epsilon})}$.
\end{Rm}

\begin{Lm}[\cite{Qiu08}]\label{Lm_2qcfa_inter}
If $L_1\in 2QCFA_{\epsilon_1}(2QCFA_{\epsilon_1}(ptime))$ and
$L_2\in 2QCFA_{\epsilon_2}(2QCFA_{\epsilon_2}(ptime))$, then
$L_1\cap L_2\in 2QCFA _{\epsilon}(2QCFA _{\epsilon}(ptime))$,
where $\epsilon=\epsilon_1+\epsilon_2-\epsilon_1\epsilon_2$.
\end{Lm}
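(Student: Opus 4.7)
The plan is to build a 2QCFA $\mathcal{A}$ for $L_1\cap L_2$ by sequentially simulating $\mathcal{A}_1$ (recognizing $L_1$ with error $\epsilon_1$) and then $\mathcal{A}_2$ (recognizing $L_2$ with error $\epsilon_2$) on the same input, accepting iff both machines accept. Writing $\mathcal{A}_i=(Q_i,S_i,\Sigma,\Theta_i,\delta_i,q_0^{(i)},s_0^{(i)},S_{acc}^{(i)},S_{rej}^{(i)})$, I would take the quantum register of $\mathcal{A}$ to live in $\mathcal{H}(Q_1)\otimes\mathcal{H}(Q_2)$ with initial state $|q_0^{(1)}\rangle\otimes|q_0^{(2)}\rangle$. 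Classical control uses (disjoint copies of) $S_1$ and $S_2$ together with a few auxiliary bookkeeping states. The accepting set is $S_{acc}^{(2)}$ and the rejecting set is $S_{rej}^{(1)}\cup S_{rej}^{(2)}$; what used to be $S_{acc}^{(1)}$ becomes a set of non-halting states that trigger a ``phase transition'' subroutine.

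In phase~1 the classical control is in $S_1$ and every quantum action of $\mathcal{A}_1$ is lifted to $\mathcal{A}$ by tensoring with the identity on $\mathcal{H}(Q_2)$: each unitary $U$ becomes $U\otimes I$, and each projective measurement $\{P_r\}$ becomes $\{P_r\otimes I\}$ (which is still a legal projective measurement). When the classical control enters a state of $S_{acc}^{(1)}$, we do not halt; instead we run a small deterministic subroutine that sweeps the head back to $|\hspace{-1.5mm}c$ and sets the classical state to $s_0^{(2)}$. In phase~2 the control is in $S_2$ and every quantum action of $\mathcal{A}_2$ is lifted by acting on the second tensor factor, so its quantum register has never been touched and is therefore in its intended initial state $|q_0^{(2)}\rangle$. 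This is precisely what keeps the two simulated runs independent and avoids any ``reset'' issue. The new machine clearly has a constant number of quantum states ($|Q_1|\cdot|Q_2|$) and classical states ($|S_1|+|S_2|+O(1)$), and by linearity of expectation $T(\mathcal{A})(w)\le T(\mathcal{A}_1)(w)+O(|w|)+T(\mathcal{A}_2)(w)$, so polynomial expected time is preserved and the $ptime$ version of the lemma follows from the general one.

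For correctness, let $p_i(w)=\Pr[\mathcal{A}_i\text{ accepts }w]$. Because phase~2 uses a fresh quantum register and the head is repositioned deterministically, the two simulations are independent conditional on phase~1 accepting, and by construction $\Pr[\mathcal{A}\text{ accepts }w]=p_1(w)\,p_2(w)$. If $w\in L_1\cap L_2$ then $p_i(w)\ge 1-\epsilon_i$, giving $\Pr[\mathcal{A}\text{ accepts }w]\ge(1-\epsilon_1)(1-\epsilon_2)=1-\epsilon$. If $w\notin L_1\cap L_2$, some $i$ has $w\notin L_i$, so $p_i(w)\le\epsilon_i$ and hence $\Pr[\mathcal{A}\text{ accepts }w]\le\epsilon_i\le\epsilon_1+\epsilon_2-\epsilon_1\epsilon_2=\epsilon$, which is exactly the claimed bound.

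The most delicate point, and what I would write out most carefully, is the claim of independence between the two simulated runs: it has to be argued that because the $\mathcal{A}_1$-phase acts only on the first tensor factor and the $\mathcal{A}_2$-phase only on the second, the measurement statistics of phase~2 factor out from those of phase~1 no matter which branch of the phase-1 computation led to acceptance. Once that is done the rest is routine verification that the tuple defined above is a legal 2QCFA and that the bounds on quantum states, classical states, and expected running time are as stated.
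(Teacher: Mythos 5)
Your construction is correct, but it is a genuinely different route from the one the paper relies on. The lemma is imported from \cite{Qiu08}, and the Remark following it reveals the shape of that proof: the composite machine has $QS(\mathcal{A}_1)+QS(\mathcal{A}_2)$ quantum states and $CS(\mathcal{A}_1)+CS(\mathcal{A}_2)+QS(\mathcal{A}_1)$ classical states, i.e.\ the quantum register is the \emph{direct sum} $\mathcal{H}(Q_1\cup Q_2)$, and the extra $QS(\mathcal{A}_1)$ classical states are exactly the cost of the ``reset'' you avoid: after phase~1 accepts, the quantum state is some unknown vector in $\mathcal{H}(Q_1)$, so the machine must perform a complete measurement in the $Q_1$ basis, record the outcome classically, and apply an outcome-dependent permutation unitary carrying that basis state to $|q_0^{(2)}\rangle$. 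Your tensor-product construction trades this for a larger quantum register ($|Q_1|\cdot|Q_2|$ versus $|Q_1|+|Q_2|$ states) in exchange for never touching the second factor during phase~1, which makes the independence/reset argument cleaner and saves the $+QS(\mathcal{A}_1)$ classical states. Both counts remain constants independent of $m$ and give $\mathbf{O}(\log\frac{1}{\epsilon})$ classical states, so the downstream applications (Theorem \ref{meq} and the $L^{twin}(m)$ theorem) go through with either construction. Two small points worth writing out if you formalize this: (i) under the paper's one-sided acceptance convention the ``yes'' case gives acceptance probability exactly $1\cdot 1=1$, and the bound $\epsilon=\epsilon_1+\epsilon_2-\epsilon_1\epsilon_2$ is only needed for the two-sided reading; and (ii) for the ``no'' case you should bound $\Pr[\mathcal{A}\text{ rejects }w]$ from below (via $\Pr[\mathcal{A}_1\text{ rejects}]$ or $\Pr[\mathcal{A}_1\text{ accepts}]\cdot\Pr[\mathcal{A}_2\text{ rejects}]$) rather than only bounding the acceptance probability from above, since a 2QCFA need not halt with probability $1$ in general.
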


\begin{Rm}
According to the proof of Lemma \ref{Lm_2qcfa_inter} in
\cite{Qiu08}, if 2QCFA $\mathcal{A}_1$ recognizes $L_1$ with
one-sided error $\epsilon_1$ (in polynomial expected time) and
2QCFA $\mathcal{A}_2$ recognizes $L_2$ with one-sided error
$\epsilon_2$ (in polynomial expected time), then there is a 2QCFA
$\mathcal{A}$ recognizes $L_1\cap L_2$ (in polynomial expected
time), where $QS(\mathcal{A})=QS(\mathcal{A}_1)+QS(\mathcal{A}_2)$
and
$CS(\mathcal{A})=CS(\mathcal{A}_1)+CS(\mathcal{A}_2)+QS(\mathcal{A}_1)$.
\end{Rm}

\begin{Lm}[\cite{She59,Var89}]\label{2DFAtoDFA}
Every $n$-state 2DFA can be simulated by a DFA with $(n+1)^{n+1}$
states.
\end{Lm}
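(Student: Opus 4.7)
This is a classical result due to Shepherdson; the plan is to describe the standard crossing-sequence / transduction-function simulation, adapted to yield exactly the bound $(n+1)^{n+1}$.

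Let $M=(Q,\Sigma,\delta,q_0,F)$ be a 2DFA with $|Q|=n$, operating on inputs framed by the usual end-markers. For each string $w\in\Sigma^*$, I would associate a \emph{transduction function} $\tau_w:Q\cup\{\diamond\}\to Q\cup\{\diamond\}$, where $\diamond$ is a fresh symbol. The intended meaning is as follows. For $q\in Q$, $\tau_w(q)$ is the state in which $M$ re-emerges immediately to the right of $w$, assuming $M$ is placed just to the right of $w$ in state $q$ with its head moving left into $w$; we set $\tau_w(q)=\diamond$ if no such re-emergence ever occurs (the head either loops inside $w$ or halts). The distinguished argument $\tau_w(\diamond)$ encodes the state in which $M$ first crosses the right boundary of $w$ when started in the standard initial configuration on the left end-marker in state $q_0$, and equals $\diamond$ if that crossing never happens.

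The core step is to show that $\tau_w$ behaves inductively on $w$: $\tau_\varepsilon$ is computed directly from $\delta$ restricted to the left end-marker, and $\tau_{w\sigma}$ is determined by $\tau_w$ and $\sigma$ alone. The key observation is that the prefix $w$ may be treated as a black box whose entire interactive behaviour is captured by $\tau_w$: whenever, while simulating the work on $w\sigma$, the head crosses from $\sigma$ back into $w$ in some state $q$, we consult $\tau_w(q)$ and either receive the state of re-emergence or learn that the computation is absorbed. Hence computing $\tau_{w\sigma}$ reduces to finite bookkeeping on the single cell $\sigma$, with lookups into $\tau_w$. Any repetition of a configuration on that cell signals a loop and is recorded as $\diamond$. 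The simulating DFA now takes the functions $\tau_w$ as its states, uses $\tau_w\mapsto\tau_{w\sigma}$ as its transition, and accepts when feeding the right end-marker to $\tau_w$ produces an accepting state of $M$; correctness follows by induction on $|w|$ from the semantics of $\tau_w$.

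For the count, $\tau_w$ is a function from an $(n+1)$-element set to an $(n+1)$-element set, so at most $(n+1)^{n+1}$ distinct functions can arise, bounding the number of reachable DFA states. The only real technical point is the bookkeeping that defines $\tau_{w\sigma}$ from $\tau_w$ and $\sigma$: one must carefully enumerate the alternating excursions of the head between $\sigma$ and $w$, and correctly detect cycles so that $\diamond$ is propagated, but because only finitely many $(q,\sigma)$-configurations on the last cell exist, the procedure terminates in a bounded number of simulated steps.
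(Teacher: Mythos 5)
The paper offers no proof of this lemma—it is quoted directly from Shepherdson and Vardi—and your argument is precisely the standard transduction-table construction from those sources: summarizing each prefix by a function on $Q\cup\{\diamond\}$ (the $n$ re-entry behaviours plus the initial-crossing datum folded into the extra argument $\diamond$), showing this table updates letter by letter, and counting $(n+1)^{n+1}$ possible tables. The construction and the bound are correct, including the treatment of non-terminating excursions via repetition of configurations on the current cell.
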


\begin{Lm}[\cite{Bir93}]\label{2NFAtoDFA}
Every $n$-state 2NFA can be simulated by a DFA with
$2^{(n-1)^2+n}$ states.
\end{Lm}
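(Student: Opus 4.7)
The plan is to follow the classical transition-table (equivalently, crossing-sequence) construction pioneered by Shepherdson in the deterministic case and adapted to the nondeterministic case by Birget. Given a 2NFA $M=(Q,\Sigma,\delta,q_{0},F)$ with $|Q|=n$, I would build an equivalent one-way DFA $D$ that processes the input left-to-right exactly once, where each state of $D$ encodes a complete ``behavior summary'' of $M$ on the prefix read so far.

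First, I would normalize $M$ so that it has a single accepting state $q_{\text{acc}}$ which is terminal (any computation that enters $q_{\text{acc}}$ halts and accepts). This leaves $n-1$ non-accepting states. For each prefix $w$, the behavior summary $\pi(w)$ consists of two pieces: (i) a binary relation $T_{w}\subseteq(Q\setminus\{q_{\text{acc}}\})\times(Q\setminus\{q_{\text{acc}}\})$ where $(p,q)\in T_{w}$ iff $M$, entering $w$ from the right boundary in state $p$ with the head moving left, has some nondeterministic computation that returns to the right boundary in state $q$ without ever passing through $q_{\text{acc}}$; and (ii) a set $A_{w}\subseteq Q$ containing those states $q$ from which an analogous re-entry computation can instead reach $q_{\text{acc}}$ while the head lies inside $w$. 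The total number of pairs $(T_{w},A_{w})$ is at most $2^{(n-1)^{2}}\cdot 2^{n}=2^{(n-1)^{2}+n}$, matching the claimed bound.

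The next step is to verify that $(T_{wa},A_{wa})$ is determined by $(T_{w},A_{w})$ and the symbol $a$ alone. This reduces to a routine fixed-point computation in the composed ``one-symbol'' transition graph at $a$ glued to the summary of $w$: any computation on $wa$ re-entering from the right decomposes into alternating excursions over the single cell $a$ (governed by $\delta(\cdot,a)$) and over $w$ (governed by $T_{w}$), with acceptance inside $w$ logged via $A_{w}$. This makes the transition function of $D$ well-defined. The initial configuration of $M$ and the left end-marker are handled by fixing an initial summary for the empty prefix and letting the left end-marker transition simulate the step out of $q_{0}$; similarly, the right end-marker is read as an ordinary final symbol and the accepting states of $D$ are those summaries $(T,A)$ that, when closed under the right end-marker transitions, lead some initial-configuration trajectory into $q_{\text{acc}}$.

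The main obstacle I expect is establishing the tighter bound $2^{(n-1)^{2}+n}$ rather than the naive $2^{n^{2}+n}$ one obtains by letting the relation range over all of $Q\times Q$. Restricting $T_{w}$ to non-accepting states is sound only because the normalization makes $q_{\text{acc}}$ terminal, so every potential use of $q_{\text{acc}}$ as an intermediate or target state of a re-entry computation is redirected into the set $A_{w}$. Carefully checking that no information about $M$'s behavior is lost under this restriction, and that the initial-configuration computation on the full input can be recovered from the summary without adding further dimensions to the encoding, is the delicate combinatorial point; once this is done, the counting and the correctness of $D$ are immediate.
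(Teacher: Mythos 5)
The paper does not actually prove this lemma; it is quoted verbatim from Birget \cite{Bir93}, so there is no in-paper argument to compare against. Judged on its own terms, your sketch follows the right general template (Shepherdson-style behaviour summaries, with relations replacing functions to accommodate nondeterminism), but it contains a genuine gap that is not a matter of routine detail.

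The summary $(T_w,A_w)$ you propose records only how $M$ behaves on excursions that \emph{enter $w$ from the right}: $T_w$ says in which states such an excursion can return to the right boundary, and $A_w$ says from which entry states it can accept inside $w$. What is missing is the forward-reachability component: the set $I_w\subseteq Q$ of states in which $M$, started from its genuine initial configuration, can cross the right boundary of $w$. This is indispensable, since the accepting states of $D$ must be determined by what the initial computation can actually reach, and $(T_w,A_w)$ carries no such information. Even a 2NFA that never moves left separates the two: for a machine that accepts exactly the inputs whose first letter is $a$ (one step $q_0\to q_{\mathrm{acc}}$ on $a$, no transition on $b$), the prefixes $ab$ and $bb$ have identical $T$ and $A$ (both are empty, as every right-entry dies on the final $b$) yet opposite acceptance status, so a DFA whose state is $(T_w,A_w)$ cannot be correct. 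Adding $I_w$ as a third component repairs correctness but inflates the count to $2^{(n-1)^2}\cdot 2^{n}\cdot 2^{n}$ (or $2^{(n-1)^2+(n-1)+n}$ after trimming $q_{\mathrm{acc}}$ from $A_w$), overshooting the claimed $2^{(n-1)^2+n}$. You flag exactly this issue yourself (``that the initial-configuration computation \ldots\ can be recovered from the summary without adding further dimensions to the encoding is the delicate combinatorial point'') and then declare the rest immediate; but that deferred point is precisely the content of the improvement over the naive $2^{n^2+n}$ bound, and the sketch supplies no mechanism for it --- for instance, a normalization under which acceptance occurs only at the right end-marker so that $A_w$ can be absorbed into the other components, together with an argument that $q_{\mathrm{acc}}$ can be excised from \emph{both} coordinates of the relation. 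As written, the proposal does not establish the stated bound.
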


\begin{Df}
Let language $L\subset \Sigma^*$ and $\epsilon<1/2$, then a 2PFA
$\mathcal{A}$ recognizes $L$ with error probability $\epsilon$ if
\begin{enumerate}
\item [{\it (1)}] $\forall w\in L$, $Pr[\mathcal{A}\  {\it
accepts}\  w]\geq 1-\epsilon$, and \item [{\it (2)}] $\forall
w\notin L$, $Pr[\mathcal{A}\ {\it rejects}\  w]\geq 1-\epsilon$.
\end{enumerate}

2PFA $\mathcal{A}$ recognizes $L$ if there is an $\epsilon<1/2$
such that $\mathcal{A}$ recognizes $L$ with error probability
$\epsilon$.
\end{Df}

\begin{Df}
Let $A,B\in \Sigma^*$  with $A\cap B=\varnothing$, then a 2PFA
$\mathcal{A}$ separates $A$ and $B$ \cite{DwS92} if there is some
$\epsilon<1/2$ such that
\begin{enumerate}
\item [{\it (1)}] $\forall w\in A$, $Pr[\mathcal{A}\  {\it
accepts}\  w]\geq 1-\epsilon$, and \item [{\it (2)}] $\forall w\in
B$, $Pr[\mathcal{A}\ {\it rejects}\  w]\geq 1-\epsilon$.
\end{enumerate}
\end{Df}

\begin{Lm}[\cite{DwS90}]\label{2PFAtoDFA}
For every $\epsilon<1/2$, $a>0$ and $d>0$, there exists a constant
$b>0$ such that, for any $c$, if $L$ is recognized by a $c$-state
2PFA with an error probability $\epsilon$ and within time $an^d$,
then $L$ is recognized by some DFA with at most $c^{bc^2}$ states,
where $n=|w|$ is the length of the input.
\end{Lm}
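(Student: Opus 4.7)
The plan is to use the Dwork--Stockmeyer \emph{crossing-matrix} technique, the probabilistic analogue of the classical crossing-sequence argument. Let $\mathcal{A}$ be a $c$-state 2PFA recognizing $L$ with error $\epsilon$ within expected time $an^d$. For each $x\in\Sigma^*$ I would associate three $c\times c$ substochastic matrices $M_x, A_x, R_x$: the $(s,s')$-entry of $M_x$ is the probability that, starting with the head just to the right of $x$ in state $s$ and moving left into $x$, it first re-emerges on the right end of $x$ in state $s'$ without halting; $A_x$ and $R_x$ record the probability of halting in an accept/reject state during such an excursion. (Symmetric matrices describe incursions from the right end-marker side.) These matrices compose in a rational way under concatenation of strings, and a standard first-passage decomposition then expresses the acceptance probability of $\mathcal{A}$ on $w=\sigma_1\cdots\sigma_n$ as a fixed rational function of the triples $(M_{x_i},A_{x_i},R_{x_i})$ at the $n+1$ cell boundaries, where $x_i=\sigma_1\cdots\sigma_i$.

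The technical heart of the argument is to leverage the polynomial time bound $an^d$ into a quantitative \emph{spectral gap}: one shows that for every $x$ the spectral radius of $M_x$ is at most $1-1/q(c)$ for some polynomial $q$ depending on $a,d,\epsilon$. The intuition is that an eigenvalue of $M_x$ arbitrarily close to $1$ would trap the head in a region bounded by occurrences of $x$ for a super-polynomial expected number of steps, realised on an input of the form $y\,x\,z\,x\cdots$, contradicting the time bound. From this gap I would then derive the key robustness statement: if each entry of $M_x,A_x,R_x$ is rounded to the nearest multiple of $1/N$ for $N=\mathrm{poly}(c)$ sufficiently large, the acceptance probability of $\mathcal{A}$ on any input is perturbed by at most $(1/2-\epsilon)/2$, so the accept/reject decision is preserved.

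Rounding produces only finitely many distinct matrix triples---at most $N^{3c^2}$, which is bounded by $c^{bc^2}$ for an appropriate constant $b$ depending on $\epsilon,a,d$. Define an equivalence on prefixes by equality of their rounded triples; since the true matrices compose under concatenation and the rounding error stays within the perturbation budget along any computation, this equivalence is effectively a right-congruence of finite index that refines the Myhill--Nerode relation for $L$. The associated quotient automaton is then a DFA for $L$ with at most $c^{bc^2}$ states.

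The main obstacle is the spectral-gap step. Composing rounded matrices naively lets the error compound with the length of the input, so one must prove that the spectral radii are uniformly bounded away from $1$ by a quantity polynomial in $1/c$, and then use the resulting geometric decay to control the compounded error across all $n$ boundaries simultaneously. This is precisely the Dwork--Stockmeyer dichotomy between polynomial and super-polynomial expected time for 2PFA; the remaining steps---first-passage decomposition, rounding, and the Myhill--Nerode quotient---are essentially formal once the gap is in hand.
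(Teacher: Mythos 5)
First, a contextual point: the paper does not prove this statement at all --- it is Lemma \ref{2PFAtoDFA}, imported verbatim from Dwork and Stockmeyer \cite{DwS90} --- so your sketch has to be judged against the actual Dwork--Stockmeyer argument rather than against anything in this paper. Your overall architecture (crossing matrices $M_x,A_x,R_x$, a finite-index equivalence on prefixes that refines the Myhill--Nerode relation, and a count of the form $(\text{values per entry})^{O(c^2)}$) is the right one, but the technical core is broken in two places. The claimed uniform spectral gap $\rho(M_x)\le 1-1/q(c)$ is false: a $c$-state 2PFA that performs an unbiased random walk over its tape halts in expected time $O(n^2)$, yet for a segment $x$ of length $\ell$ the first-return-to-the-right matrix $M_x$ has an eigenvalue of modulus $1-\Theta(1/\ell)$, which tends to $1$ as $\ell$ grows no matter how small $c$ is. (This is exactly the gambler's-ruin walk used in steps 4--5 of Figure \ref{t1} of this paper.) The most the time bound $an^d$ can give you is a gap of order $1/(a|x|^d)$, degrading with $|x|$, not a bound in terms of $c$ alone. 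Consequently your rounding step also fails: the first-passage decomposition contains factors of the form $(I-M_x)^{-1}$, whose norm can be $\mathrm{poly}(|x|)$, so an additive perturbation of the entries at granularity $1/N$ with $N=\mathrm{poly}(c)$ perturbs the acceptance probability by an amount that grows with $n$ and swamps the budget $(1/2-\epsilon)/2$ once $n\gg\mathrm{poly}(c)$.

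What Dwork and Stockmeyer (following Greenberg and Weiss) actually do is use a multiplicative, not additive, equivalence: two prefixes are identified when their crossing matrices have the same zero pattern and corresponding nonzero entries agree up to a ratio $1+\mu$. The key lemma is that the absorption probabilities of a $k$-state Markov chain are perturbed by at most a factor $(1+\mu)^{O(k)}$ under such entrywise multiplicative perturbation; choosing $\mu=\Theta(1/c)$ so that this factor stays below $(1-\epsilon)/\epsilon$ keeps every accept/reject decision invariant, with no spectral-gap hypothesis needed. The polynomial time bound enters elsewhere --- to control the range of nonzero entries that can actually influence the decision, hence the number of ratio classes per entry, which is where the exponent $bc^2$ with $b=b(\epsilon,a,d)$ comes from --- and not to bound $\rho(M_x)$. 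Finally, you should not claim your equivalence is a right-congruence (rounded triples do not compose); it suffices, and is all that is true, that it refines the Myhill--Nerode relation, which already bounds the size of the minimal DFA.
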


\begin{Lm}[\cite{DwS92}]\label{not-in-2PFA}
Let $A, B\subseteq \Sigma^*$ with $A\cap B=\varnothing$. Suppose
there is an infinite set $I$ of positive integers and, for each
$m\in I$, a set $W_m\subseteq\Sigma^*$ such that
\begin{enumerate}
  \item [{\it (1)}] $|w|\leq m$ for all $w\in W_m$,
  \item [{\it (2)}] for every integer $k$, there is an $m_k$ such that $|W_m|\geq m^k$ for all $m\in I$ with $m\geq m_k$, and
  \item [{\it (3)}] for every $m\in I$ and every $w, w'\in W_m$ with $w\neq w'$, there are words $u, v\in \Sigma^*$ such that either $uwv\in A$ and $uw'v\in B$ or $uwv\in B$ and $uw'v\in A$.
\end{enumerate}
Then no 2PFA separates $A$ and $B$.
\end{Lm}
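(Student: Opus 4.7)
The plan is to argue by contradiction via the classical behavior-signature technique. Suppose, for contradiction, that some 2PFA $\mathcal{M}$ with $c$ states separates $A$ and $B$ with error $\epsilon<1/2$. To every word $w\in\Sigma^{*}$ I would attach a \emph{behavior profile} $\beta(w)$ recording, for every classical state $q$ and every entry boundary (left or right), the joint distribution over the outcomes ``halt inside $w$ in accept,'' ``halt inside $w$ in reject,'' and ``exit boundary $b'$ in state $q'$.'' This data is encoded as a substochastic matrix $T_{w}$ of dimension $\mathbf{O}(c)$. The key interchangeability property is that $\beta(w)=\beta(w')$ implies $\Pr[\mathcal{M}\text{ accepts }uwv]=\Pr[\mathcal{M}\text{ accepts }uw'v]$ for every context $(u,v)\in\Sigma^{*}\times\Sigma^{*}$, since the overall acceptance probability on $uwv$ can be written as a fixed multilinear expression in the boundary-state distributions produced by $u$ and $v$ together with $T_{w}$.

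The main technical step is a \emph{polynomial quantization}: I would show that there is a polynomial $p$ (depending on $c$ and on the gap $1-2\epsilon$, but not on $w$) such that the words of length at most $m$ fall into at most $p(m)$ classes under the equivalence relation ``close enough in $\beta$ that no context can separate them with probability gap exceeding $1-2\epsilon$.'' The multilinear structure of the composition $(T_{u},T_{w},T_{v})\mapsto\Pr[\mathcal{M}\text{ accepts }uwv]$ implies that an $\eta$-perturbation in $T_{w}$ changes the acceptance probability by $\mathbf{O}(\eta)$ \emph{uniformly} in $(u,v)$, so it suffices to cover the bounded set of realizable matrices $T_{w}$ arising from words $|w|\le m$ by $\eta$-balls with $\eta$ a constant multiple of $1-2\epsilon$, and then count how many such balls can actually contain a $T_{w}$.

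Once the polynomial bound $p(m)$ is established, I would invoke the hypotheses. Choose an integer $k$ with $m^{k}>p(m)$ for all sufficiently large $m$, and fix $m\in I$ with $m\geq m_{k}$. By hypothesis~(2), $|W_{m}|\geq m^{k}>p(m)$, so by the pigeonhole principle there are distinct $w,w'\in W_{m}$ lying in the same equivalence class. Hypothesis~(3) furnishes a context $(u,v)$ with, say, $uwv\in A$ and $uw'v\in B$. The separation assumption forces $\Pr[\mathcal{M}\text{ accepts }uwv]\geq 1-\epsilon$ and $\Pr[\mathcal{M}\text{ accepts }uw'v]\leq\epsilon$, so these two probabilities differ by at least $1-2\epsilon$, contradicting the defining property of the equivalence relation.

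The main obstacle will be establishing the polynomial bound in the second step. The subtlety is that the contexts $u,v$ in hypothesis~(3) are \emph{unbounded} in length, so one cannot simply truncate them and count behaviors: the Lipschitz-type estimate on acceptance probabilities must hold uniformly over all $u,v$. This reduces to controlling the sensitivity of the map $T_{w}\mapsto\Pr[\mathcal{M}\text{ accepts }uwv]$ via the fundamental matrix of the associated absorbing Markov chain, exploiting that on every input a 2PFA halts with probability one so the absorption probabilities sum to one and errors do not compound catastrophically with $|u|+|v|$.
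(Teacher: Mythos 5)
The paper offers no proof of this lemma---it is quoted verbatim from Dwork--Stockmeyer \cite{DwS92}---so I am comparing your sketch against the argument in that source. Your overall architecture is the right one and matches theirs: attach to each word a crossing-behavior matrix, show that words with indistinguishable behavior are interchangeable in every context, bound the number of behavior classes of words of length at most $m$ by a polynomial $p(m)$, and then use hypotheses (2) and (3) to pigeonhole two distinct, context-separable words of $W_m$ into one class.

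The genuine gap is in your quantization step. The claim that an additive $\eta$-perturbation of $T_w$ changes $\Pr[\mathcal{M}\text{ accepts }uwv]$ by $\mathbf{O}(\eta)$ \emph{uniformly over all contexts} $(u,v)$ is false: the sensitivity is governed by the fundamental matrix $(I-Q)^{-1}$ of the composed boundary chain, whose entries are expected numbers of visits to the $w$-boundary and can grow without bound with $|u|,|v|$; halting with probability one does not cap them. A clean way to see that the step cannot be repaired as stated: if constant-radius additive balls sufficed, you would get a \emph{constant} number of behavior classes (independent of $m$), so the conclusion would follow whenever $|W_m|\to\infty$---but then taking $W_m=\{a^i: i\le m\}$ with contexts $v=b^i$ would show no 2PFA recognizes $\{a^nb^n\}$, contradicting Freivalds' theorem. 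The correct estimate (Greenberg--Weiss, as used by Dwork--Stockmeyer) is \emph{multiplicative}: if two crossing matrices have the same zero pattern and corresponding nonzero entries within ratio $1+\delta$, then acceptance probabilities in every context agree within ratio $(1+\delta)^{\mathbf{O}(c)}$, because absorption probabilities of the $\mathbf{O}(c)$-state boundary chain are ratios of polynomials of degree bounded by the number of its states, independent of $|u|,|v|$. Since a nonzero crossing probability of a word of length at most $m$ is at least $p_{\min}^{\mathbf{O}(cm)}$, each entry has only $\mathbf{O}(m)$ ratio-quantization levels, giving $m^{\mathbf{O}(c^2)}$ classes---polynomially many, not constantly many. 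This is precisely why hypothesis (2) demands $|W_m|\ge m^k$ for \emph{every} $k$ rather than mere unboundedness, a feature your additive version would render superfluous.
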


We recall some basic notations of {\it communication complexity},
and we refer the reader to \cite{KusNis97,KusNis97b,Yao79} for
more details. It deals with the situation where there are only two
communicating  parties and it deals with very simple tasks of
{computing two argument functions where one argument is known to
one party and the other argument is known to the other party. It
completely ignores the computational resources needed by the
parties and it focuses solely on the amount of communication
exchanged between the parties.

Let $X, Y, Z$ be arbitrary finite sets. We consider a two-argument
function $f: X\times Y\rightarrow Z$ and two communicating
parties,  Alice is given an input $x\in X$ and Bob is given an
input $y\in Y$. They wish to compute $f(x,y)$.

The computation of the value $f(x,y)$ is done using a
communication protocol. During the execution of the protocol, the
two parties alternate roles in sending messages. Each of these
messages is a string of bits. The protocol, based on the
communication so far, specifies whether the execution terminated
(in which case it also specifies what is the output). If the
execution has not terminated, the protocol specifies what message
the sender (Alice or Bob) should send next, as a function of its
input and of the communication so far. A communication protocol
${\cal P}$ computes the function $f$, if for every input pair
$(x,y)\in A\times B$ the protocol terminates with the value
$f(x,y)$ as its output.

We define the {deterministic communication complexity} of ${\cal
P}$ as the worst case number of bits exchanged by the protocol.
The {deterministic communication complexity} of a function $f$ is
the communication complexity of the best protocol that computes
$f$, denoted by $D(f)$.

\begin{Lm}[\cite{KusNis97}]\label{ComC-EQ}
If Alice and Bob each holds an $n$ length string, $x,y\in
\{a,b\}^n$ and the equality function, $EQ(x,y)$, is defined to be
$1$ if $x=y$ and $0$ otherwise, then
\begin{equation}
  D(EQ)=n+1.
\end{equation}
\end{Lm}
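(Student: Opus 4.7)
The plan is to prove the claimed equality by matching an explicit protocol with a combinatorial lower bound based on the rectangle partition induced by any deterministic protocol.

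For the upper bound $D(EQ) \leq n+1$, I would exhibit the naive protocol: Alice transmits her entire string $x \in \{a,b\}^n$ to Bob using exactly $n$ bits, Bob then compares $x$ with his own string $y$ symbol by symbol and returns a single bit that equals $EQ(x,y)$. The total communication is $n+1$ bits in the worst case, so the best protocol uses at most this many bits.

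For the lower bound $D(EQ) \geq n+1$, I would invoke the standard rectangle framework from communication complexity. The first step is to recall that every deterministic protocol of cost $c$ partitions the input space $X \times Y = \{a,b\}^n \times \{a,b\}^n$ into at most $2^c$ monochromatic combinatorial rectangles, one per leaf of the protocol tree, where by a rectangle I mean a set of the form $A \times B$ with $A \subseteq X$, $B \subseteq Y$, and monochromatic means that $EQ$ is constant on $A \times B$. The second, combinatorial, step is to show that the diagonal $D = \{(x,x) : x \in \{a,b\}^n\}$ is a fooling set of size $2^n$ for the value $1$: if two distinct diagonal pairs $(x,x)$ and $(y,y)$ with $x \neq y$ belonged to a common monochromatic rectangle $A \times B$, then $x,y \in A$ and $x,y \in B$, so $(x,y) \in A \times B$ as well, yet $EQ(x,y) = 0 \neq 1 = EQ(x,x)$, contradicting monochromaticity. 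Hence the partition must contain at least $2^n$ rectangles, each labelled $1$. Because the diagonal does not cover all of $X \times Y$ (for instance $(a^n, b^n) \notin D$ when $n \geq 1$), at least one further rectangle labelled $0$ is needed, giving at least $2^n+1$ rectangles in total. Combining with the leaf-count bound yields $2^c \geq 2^n+1$, and since $c$ is a nonnegative integer this forces $c \geq n+1$.

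The main obstacle is obtaining the precise constant $+1$ rather than merely $c \geq n$. The fooling-set calculation cleanly yields $2^n$ rectangles for the value $1$; the work is in justifying the extra rectangle for the value $0$ and in translating the inequality $2^c \geq 2^n + 1$ into the integer bound $c \geq n+1$. Once this is in place, the matching upper bound closes the argument and gives $D(EQ) = n+1$.
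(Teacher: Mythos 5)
Your proof is correct: the $n+1$ upper bound via the trivial protocol and the lower bound via the $2^n$-element fooling set $\{(x,x)\}$ plus one extra $0$-rectangle, giving $2^c \geq 2^n+1$ and hence $c \geq n+1$, is exactly the standard argument for this fact. The paper does not prove this lemma itself but imports it from Kushilevitz's survey, where the same fooling-set argument is used, so your proposal matches the source.
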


\section{State succinctness of 2QCFA on promise problems}
In this section, we will give an infinite family of promise
problems\footnote{A promise problem $A = (A_{yes}, A_{no})$ is
solved by a 2QCFA $\mathcal{A}$ with one-sided error
$\epsilon<1/2$ if {\it (1)} $\forall w\in A_{yes}$,
$Pr[\mathcal{A}\  {\it accepts}\  w]=1$, and {\it (2)} $\forall
w\in A_{no}$, $Pr[\mathcal{A}\ {\it rejects}\  w]\geq 1-\epsilon$.
A promise problem $A = (A_{yes}, A_{no})$ is solved by a 2PFA
$\mathcal{A}$ with error probability $\epsilon<1/2$ if {\it (1)}
$\forall w\in A_{yes}$, $Pr[\mathcal{A}\  {\it accepts}\  w]\geq
1-\epsilon$, and {\it (2)} $\forall w\in A_{no}$, $Pr[\mathcal{A}\
{\it rejects}\  w]\geq 1-\epsilon$. A promise problem $A =
(A_{yes}, A_{no})$ is solved by a DFA (2DFA, 2NFA) $\mathcal{A}$
if {\it (1)} $\forall w\in A_{yes}$, $\mathcal{A}$ accepts $w$ and
{\it (2)} $\forall w\in A_{no}$, $\mathcal{A}$ rejects $w$. }
which can be solved by 2QCFA with one-sided error $\epsilon$ in a
polynomial expected running time with a constant number of quantum
states and  $\mathbf{O}(\log{\frac{1}{\epsilon})}$ classical
states.

A {\it promise problem} is a pair $A = (A_{yes}, A_{no})$, where
$A_{yes}$, $A_{no}\subset \Sigma^*$ are disjoint sets of strings
\cite{Wat09}. (Languages may be viewed as promise problems that
obey the additional constraint $A_{yes}\cup A_{no}=\Sigma^*$.) For
an alphabet $\Sigma=\{a,b\}$ and any $m\in {\mathbb{Z}}^+$, let
$A^{eq}_{yes}(m)=\{a^mb^m\}$ and $A^{eq}_{no}(m)=\{w\in\{a,b\}^*|
w\neq a^mb^m \ {\it and} \ |w|\geq m\}$. For any $\epsilon<1/2$,
we will prove that promise problems $A^{eq}(m)=(A^{eq}_{yes}(m),
A^{eq}_{no}(m))$ can be solved by a 2QCFA with one-sided error
$\epsilon$ in a polynomial expected running time with a constant
number of quantum states and
$\mathbf{O}(\log{\frac{1}{\epsilon})}$ classical states, whereas
the sizes of the corresponding DFA, 2DFA and 2PFA grow without a
bound.

In order to prove that the promise problem $A^{eq}(m)$ can be
solved by 2QCFA, we first prove that a simpler promise problem can
be solved by 2QCFA.

For an alphabet $\Sigma$ and an $m\in {\mathbb{Z}}^+$, let
$A_{yes}(m)=\{w\in \Sigma^* \mid |w|=m\}$ and $A_{no}(m)=\{w\in
\Sigma^*\mid |w|\neq m\ {\it and }\ |w|\geq m/2\}$. For any
$\epsilon<1/2$, we will prove that there is a 2QCFA that can solve
promise problem $A(m)=(A_{yes}(m), A_{no}(m))$ with one-sided
error $\epsilon$ in a polynomial expected running time with a
constant number of quantum states and
$\mathbf{O}(\log{\frac{1}{\epsilon})}$ classical states. The
language $L(m)=\{w\in \Sigma^*\mid |w|=m\}$ was showed to be
recognized by a 7-state {\it one way quantum finite automata with
restart} ($1QFA^\circlearrowleft$) with one-sided error $\epsilon$
in an exponential expected time by Yakaryilmaz and Cem Say
\cite{Yak10}. In the same paper, they mentioned that
$1QFA^\circlearrowleft$ can be simulated by 2QCFA easily. In
following theorem we will prove in details that the promise
problem $A(m)$ can be solved by a 2QCFA with one-sided error
$\epsilon$ in a polynomial expected time.

\begin{Th}\label{Am}
For any $m\in {\mathbb{Z}}^+$ and any $\epsilon<1/2$, there exists
a 2QCFA $\mathcal{A}(m,\epsilon)$ which accepts any $w\in
A_{yes}(m)$ with certainty, and rejects any $w\in A_{no}(m)$ with
probability at least $1-\epsilon$, where
$QS(\mathcal{A}(m,\epsilon))$ is a constant and
$CS(\mathcal{A}(m,\epsilon))\in
\mathbf{O}(\log{\frac{1}{\epsilon})}$. Furthermore, we have
$T(\mathcal{A}(m,\epsilon))\in\mathbf{O}(|w|^4)$, where $w$ is the
input.
\end{Th}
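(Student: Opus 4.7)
The plan is to build $\mathcal{A}(m,\epsilon)$ as a classical outer loop driving repeated quantum ``length tests'', following the two-level design of Ambainis and Watrous from Lemma \ref{Leq}: a constant-dimensional quantum register is evolved and measured inside each test, and a biased random-walk amplification fitting in $\mathbf{O}(\log(1/\epsilon))$ classical states combines the test outcomes. The promise $|w|\geq m/2$ is what keeps the expected running time polynomial rather than exponential (as it is for the $1QFA^{\circlearrowleft}$ recognizer of $L(m)$ of Yakaryilmaz and Cem Say \cite{Yak10}).

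Each quantum test is a single left-to-right sweep in which, at every tape cell, a fixed unitary $U_m$ (constant dimension, entries depending on $m$) acts on the register; at the right end-marker a two-outcome projective measurement yields either ``consistent'' or ``reject''. The construction will be designed so that (i) every $w\in A_{yes}(m)$ yields ``consistent'' with probability $1$, and (ii) every $w\in A_{no}(m)$ yields ``reject'' with probability at least $1/q(|w|)$ for a fixed polynomial $q$. A reject halts the machine with rejection; a consistent outcome triggers a walk back to the left end-marker and an update of the classical random walk, whose absorbing state encodes acceptance. By the standard analysis of a biased walk with bias $\Omega(1/q(|w|))$ and absorption depth $\Theta(\log(1/\epsilon))$, the error on inputs $w\in A_{no}(m)$ is at most $\epsilon$, and the expected number of tests is $\mathbf{O}(q(|w|)\log(1/\epsilon))$. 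Since each test plus its return walk costs $\mathbf{O}(|w|)$ steps, the expected running time is $\mathbf{O}(q(|w|)\,|w|\log(1/\epsilon))\subseteq\mathbf{O}(|w|^4)$ for a polynomial $q$ of small degree; the walk state space and any auxiliary phase-tracking fit in $\mathbf{O}(\log(1/\epsilon))$ classical states, and the register uses only $\mathbf{O}(1)$ quantum states.

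The main technical obstacle is the construction of $U_m$ satisfying both (i) and (ii). The naive single-qubit rotation by $\pi/(2m)$ makes (i) hold trivially (the state ends at $|1\rangle$ when $|w|=m$), but it fails (ii) at $|w|=(2k+1)m$ for $k\geq 1$, where the post-sweep state is still $\pm|1\rangle$ so the reject outcome has probability $0$. To rescue (ii), the plan is to enlarge the register by one or two auxiliary qubits coupled to the first qubit via controlled rotations, chosen so that some reject-subspace amplitude is generated at every $|w|\geq m/2$ with $|w|\neq m$, while the certain-accept property at $|w|=m$ is preserved. The polynomial lower bound $1/q(|w|)$ in (ii) will then follow from an elementary trigonometric estimate for the resulting angular gaps, where the promise $|w|\geq m/2$ is essential for bounding these gaps uniformly away from $0$ as a function of $|w|$.
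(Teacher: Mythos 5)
Your overall skeleton (repeat a one-sweep quantum length test, reject on a bad measurement outcome, and amplify acceptance slowly enough that the per-round rejection probability on bad inputs dominates) matches the paper's proof, which follows the Ambainis--Watrous design. However, there is a genuine gap exactly at the point you yourself flag as ``the main technical obstacle'': you never construct the unitary realizing property (ii), and the direction you propose --- enlarging the register with auxiliary qubits coupled by controlled rotations --- is both unnecessary and unsubstantiated. You give neither the operators nor the ``elementary trigonometric estimate'' that is supposed to yield the $1/q(|w|)$ rejection bound, and it is not evident that a rational-angle multi-qubit scheme escapes the same periodic failures you correctly identified for the rotation by $\pi/(2m)$. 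The paper's fix stays with a single qubit: rotate by $m\alpha$ at the left end-marker and by $-\alpha$ on each input symbol, where $\alpha=\sqrt{2}\pi$ is an \emph{irrational} multiple of $\pi$. After reading $n$ symbols the state is $\cos((m-n)\alpha)|q_0\rangle+\sin((m-n)\alpha)|q_1\rangle$, so the rejection probability $\sin^2(\sqrt{2}(m-n)\pi)$ vanishes only when $n=m$, and a closest-integer/concavity argument gives the quantitative bound $\sin^2(\sqrt{2}(m-n)\pi)>1/(2(m-n)^2+1)$. This irrationality idea is the heart of the theorem and is absent from your proposal. (Note also that in the paper only the end-marker unitary depends on $m$; the per-symbol rotation does not.)

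A second, smaller gap concerns the amplification. For $w\in A_{no}(m)$ the per-round rejection probability is only $\Omega(1/(m-n)^2)$, which via the promise $|w|\geq m/2$ is $\Omega(1/|w|^2)$; hence the per-round \emph{acceptance} probability must be driven down to $\mathbf{O}(\epsilon/|w|^2)$, i.e.\ polynomially small in the input length. A counter or absorbing walk over only $\mathbf{O}(\log{\frac{1}{\epsilon})}$ classical states cannot by itself produce an input-length-dependent damping factor. The paper obtains the $1/(|w|+1)^2$ factor by running two unbiased random walks of the tape head from the first cell to an end-marker (each reaching the right end-marker with probability $1/(|w|+1)$) and then flipping $k=\Theta(\log{\frac{1}{\epsilon})}$ coins, accepting only if everything succeeds. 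Your description of the classical side (``biased random-walk amplification \dots absorption depth $\Theta(\log(1/\epsilon))$'') does not make clear where this polynomial damping comes from; without it, a sufficiently long input in $A_{no}(m)$ would be accepted with probability close to $1$, and with it, the expected running time analysis gives the claimed $\mathbf{O}(|w|^4)$ (about $|w|^2$ iterations, each costing $\mathbf{O}(|w|^2)$ for the head walks), not $\mathbf{O}(q(|w|)|w|\log(1/\epsilon))$ as you state.
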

\begin{proof}
The main idea is as follows: we consider a 2QCFA
$\mathcal{A}(m,\epsilon)$ with 2 quantum states $|q_0\rangle$ and
$|q_1\rangle$. $\mathcal{A}(m,\epsilon)$ starts with the quantum
state $|q_0\rangle$. When $\mathcal{A}(m,\epsilon)$ reads the left
end-marker$\ |\hspace{-1.5mm}c$, the state is rotated by angle
$\sqrt{2}m\pi$ and every time when $\mathcal{A}(m,\epsilon)$ reads
a symbol $\sigma\in\Sigma^*$ , the state is rotated by angle
$-\alpha=-\sqrt 2\pi$ (notice that $\sqrt{2}m\pi=m\alpha$). When
the right end-marker $\$$ is reached, $\mathcal{A}(m,\epsilon)$
measures the quantum state. If it is $|q_1\rangle$, the input
string $w$ is rejected. Otherwise, the process is repeated.

 We now complete the description of $\mathcal{A}(m,\epsilon)$ as sketched in Figure \ref{t1}. The states of the automaton will be over the orthogonal base
$\{|q_0\rangle, |q_1\rangle\}$ and will use the following two unitary transformations\\

 \begin{tabular}{|l|l|}
  \hline
  $U_{|\hspace{-1.1mm}c}|q_0\rangle=\cos m\alpha |q_0\rangle+\sin m\alpha|q_1\rangle$ &  $U_{-\alpha}|q_0\rangle=\cos \alpha |q_0\rangle-\sin \alpha|q_1\rangle$
   \\

  $U_{|\hspace{-1.1mm}c}|q_1\rangle=-\sin m\alpha |q_0\rangle+\cos m\alpha|q_1\rangle$ & $U_{-\alpha}|q_0\rangle=\sin \alpha |q_0\rangle+\cos \alpha|q_1\rangle$ \\

  \hline
\end{tabular}\\

\begin{figure}
  \centering
  \begin{tabular}{|l|}
  \hline
Repeat the following ad infinitum:\\
1. Set the quantum state to $|q_0\rangle$, read the left end-marker $\ |\hspace{-1.5mm}c$, and perform $U_{|\hspace{-1mm}c}$ on $|q_0\rangle$. \\
2. Until the scanned symbol is the right end-marker $\$$, do the following:\\
\ \ (2.1). Perform $U_{-\alpha}$ on the current quantum state ($U_{\alpha}$ is defined in the proof of Theorem \ref{Am}).\\
\ \ (2.2). Move the tape head one square to the right.\\
3. Measure the quantum state. If the result is not $|q_0\rangle$, reject.\\
4. Repeat the following subroutine two times:\\
\ \ (4.1).Move the tape head to the first input symbol.\\
\ \ (4.2).Move the tape head one square to the right.\\
\ \ (4.3).While the currently scanned symbol is not$\ |\hspace{-1.5mm}c$ or $\$$, do the following:\\
\ \ \ \ \ \ Simulate a coin flip. If the result is ``head", move right. Otherwise, move left.\\
5. If both times the process ends at the right end-marker $\$$, do:\\
\ \ \ \ Simulate $k$ coin-flips and if all outcomes are ``heads", accept.\\
  \hline
\end{tabular}

  \centering\caption{ Description of the behaviour of 2QCFA $\mathcal{A}(m,\epsilon)$. The choice of $k$ will depend on $\epsilon$.}\label{t1}
\end{figure}

\begin{Lm}\label{m=n}
If the input $w\in A_{yes}(m)$, then the quantum state of
$\mathcal{A}(m,\epsilon)$ will evolve with certainty into
$|q_0\rangle$ after the loop \textbf{2}.
\end{Lm}
\begin{proof}
If $w\in A_{yes}(m)$, then $|w|=m$.   The quantum state after the
loop \textbf{2} can be described as follows:
\begin{equation}
|q\rangle=(U_{-\alpha})^m U_{|\hspace{-1.1mm}c}|q_0\rangle=\left(
  \begin{array}{cc}
    \cos \alpha  & \sin \alpha \\
    -\sin\alpha  & \cos \alpha \\
  \end{array}
\right)^m\left(
  \begin{array}{cc}
    \cos m\alpha  & -\sin m\alpha \\
    \sin m\alpha  & \cos  m\alpha \\
  \end{array}
\right)|q_0\rangle
\end{equation}
\begin{equation}
=\left(
  \begin{array}{cc}
    \cos m\alpha  & \sin m\alpha \\
    -\sin m\alpha  & \cos m\alpha \\
  \end{array}
\right)\left(
  \begin{array}{cc}
    \cos m\alpha  & -\sin m\alpha \\
    \sin m\alpha  & \cos  m\alpha \\
  \end{array}
\right)|q_0\rangle=\left(
  \begin{array}{cc}
    1  & 0 \\
    0  & 1 \\
  \end{array}
\right)|q_0\rangle=|q_0\rangle.
\end{equation}
\end{proof}

\begin{Lm}\label{m!=n}
If $w\in A_{no}(m)$, $|w|=n$, then $\mathcal{A}(m,\epsilon)$
rejects $w$ after the step \textbf{3} with a probability at least
$1/(2(m-n)^2+1)$.
\end{Lm}
\begin{proof}
Starting with the state $|q_0\rangle$, $\mathcal{A}(m,\epsilon)$
changes its quantum state to  $|q\rangle=(U_{-\alpha})^n
U_{|\hspace{-1.1mm}c}|q_0\rangle$ after the loop \textbf{2},
 the quantum state can be described as follows:
\begin{equation}
|q\rangle=(U_{-\alpha})^n U_{|\hspace{-1.1mm}c}|q_0\rangle=\left(
  \begin{array}{cc}
    \cos \alpha  & \sin \alpha \\
    -\sin\alpha  & \cos \alpha \\
  \end{array}
\right)^n\left(
  \begin{array}{cc}
    \cos m\alpha  & -\sin m\alpha \\
    \sin m\alpha  & \cos  m\alpha \\
  \end{array}
\right)|q_0\rangle
\end{equation}
\begin{equation}
=\left(
  \begin{array}{cc}
    \cos n\alpha  & \sin n\alpha \\
    -\sin n\alpha  & \cos n\alpha \\
  \end{array}
\right)\left(
  \begin{array}{cc}
    \cos m\alpha  & -\sin m\alpha \\
    \sin m\alpha  & \cos  m\alpha \\
  \end{array}
\right)|q_0\rangle
\end{equation}
\begin{equation}
=\left(
  \begin{array}{cc}
    \cos (m+n)\alpha  &\sin (m-n)\alpha \\
    \sin (m-n)\alpha  & \cos  (m+n)\alpha \\
  \end{array}
\right)|q_0\rangle=\cos((m-n)\alpha)|q_0\rangle+\sin((m-n)\alpha)|q_1\rangle.
\end{equation}

The probability of observing  $|q_1\rangle$ is
$\sin^2(\sqrt{2}(m-n)\pi)$ in the step \textbf{3}. Without loss of
generality, we assume that $m-n>0$. Let $l$ be the closest integer
to $\sqrt{2}(m-n)$. If $\sqrt{2}(m-n)>l$, then $2(m-n)^2>l^2$. So
we get $2(m-n)^2-1\geq l^2$ and $l\leq \sqrt{2(m-n)^2-1}$. We have
\begin{equation}
\sqrt{2}(m-n)-l\geq \sqrt{2}(m-n)-\sqrt{2(m-n)^2-1}
\end{equation}

\begin{equation}
=\frac{(\sqrt{2}(m-n)-\sqrt{2(m-n)^2-1})(\sqrt{2}(m-n)+\sqrt{2(m-n)^2-1})}{\sqrt{2}(m-n)+\sqrt{2(m-n)^2-1}}
\end{equation}
\begin{equation}
=\frac{1}{\sqrt{2}(m-n)+\sqrt{2(m-n)^2-1}}>\frac{1}{2\sqrt{2}(m-n)}.
\end{equation}

Because $l$ is the closest integer to $\sqrt{2}(m-n)$, we have
$0<\sqrt{2}(m-n)-l<1/2$. Let $f(x)=sin(x\pi)-2x$. We have
 $f''(x)=-\pi^2\sin(x\pi)\leq 0$ when $x\in
[0,1/2]$. That is to say, $f(x)$ is concave in $[0,1/2]$, and we
have $f(0)=f(1/2)=0$. So for any $x\in[0,1/2]$, it holds that
$f(x)\geq 0$, that is, $\sin(x\pi)\geq 2x$. Therefore, we have
\begin{equation}
\sin^2(\sqrt{2}(m-n)\pi)=\sin^2((\sqrt{2}(m-n)-l)\pi)
\end{equation}
\begin{equation}
\geq (2(\sqrt{2}(m-n)-l))^2 =4(\sqrt{2}(m-n)-l)^2
\end{equation}
\begin{equation}
> 4(\frac{1}{2\sqrt{2}(m-n)})^2=\frac{1}{2(m-n)^2}>\frac{1}{2(m-n)^2+1}.
\end{equation}

If $\sqrt{2}(m-n)<l$, then $2(m-n)^2<l^2$. So we get
$2(m-n)^2+1\leq l^2$ and $l\geq \sqrt{2(m-n)^2+1}$. We have
\begin{equation}
\sqrt{2}(m-n)-l\leq \sqrt{2}(m-n)-\sqrt{2(m-n)^2+1}
\end{equation}
\begin{equation}
=\frac{(\sqrt{2}(m-n)-\sqrt{2(m-n)^2+1})(\sqrt{2}(m-n)+\sqrt{2(m-n)^2+1})}{\sqrt{2}(m-n)+\sqrt{2(m-n)^2+1}}
\end{equation}
\begin{equation}
=\frac{-1}{\sqrt{2}(m-n)+\sqrt{2(m-n)^2+1}}<\frac{-1}{2\sqrt{2(m-n)^2+1}}.
\end{equation}

It follows that
\begin{equation}
l-\sqrt{2}(m-n)>\frac{1}{2\sqrt{2(m-n)^2+1}}.
\end{equation}
Because $l$ is the closest integer to $\sqrt{2}(m-n)$, we have
$0<l-\sqrt{2}(m-n)<1/2$. Therefore, we have
\begin{equation}
\sin^2(\sqrt{2}(m-n)\pi)=\sin^2((\sqrt{2}(m-n)-l)\pi)
\end{equation}
\begin{equation}
=\sin^2((l-\sqrt{2}(m-n))\pi)\geq (2(l-\sqrt{2}(m-n)))^2
\end{equation}
\begin{equation}
=4(l-\sqrt{2}(m-n))^2>
4(\frac{1}{2\sqrt{2(m-n)^2+1}})^2=\frac{1}{2(m-n)^2+1}.
\end{equation}
So the lemma has been proved.
\end{proof}

Simulation of a coin flip in the steps \textbf{4} and \textbf{5}
is a necessary component in the above algorithm. We will show that
coin-flips can be simulated by a 2QCFA using two quantum states
$|q_0\rangle$ and $|q_1\rangle$.

\begin{Lm}
A coin flip in the algorithm can be simulated by a 2QCFA
$\mathcal{A}(m,\epsilon)$ using two quantum states $|q_0\rangle$
and $|q_1\rangle$.
\end{Lm}
\begin{proof}
 Let us consider a projective measurement $M=\{P_0,P_1\}$ defined by
\begin{equation}
 P_0=|q_0\rangle\langle q_0|, P_1=|q_1\rangle\langle q_1|,
\end {equation}
whose classical outcomes will be denoted by 0 and 1, representing
the ``tail'' and ``head''  of a coin flip, respectively. Hadamard
unitary operator
\begin{equation}
  H=\left(%
\begin{array}{cc}
  \frac{1}{\sqrt{2}} &  \frac{1}{\sqrt{2}} \\
   \frac{1}{\sqrt{2}} &  -\frac{1}{\sqrt{2}} \\
\end{array}%
 \right).
\end {equation}
Hadamard operator changes basis states
\begin{equation}
|q_0\rangle\rightarrow|\psi\rangle=\frac{1}{\sqrt{2}}(|q_0\rangle+|q_1\rangle),\
\
|q_1\rangle\rightarrow|\phi\rangle=\frac{1}{\sqrt{2}}(|q_0\rangle-|q_1\rangle).
\end{equation}

Suppose now that the machine starts with the state $|q_0\rangle$,
changes its quantum state by $H$, and then measures the quantum
state with $M$. Then  we will get the result  0 or 1 with
probability
 $\frac{1}{2}$. This is similar to a coin flip
 process.
\end{proof}

\begin{Lm}\cite{Amb02}
If the length of the input string is $n$, then every execution of
the loops \textbf{4} and \textbf{5} leads to the acceptance with a
probability $1/2^k(n+1)^2$.
\end{Lm}
\begin{proof}
The loop \textbf{4} performs two times of random walk starting at
location 1 and ending at location 0 (the left end-marker\ \
$|\hspace{-1.5mm}c$) or at location $n+1$ (the right end-marker
$\$$). It is known from probability theory that the probability of
reaching the location $n+1$ is $1/(n+1)$ (see Chapter14.2 in
\cite{Fel67}). Repeating it twice and flipping $k$ coins, we  get
the probability $1/2^k(n+1)^2$.
\end{proof}

If we take $k=1+\lceil\log{\frac{1}{\epsilon}}\rceil$, then
$\epsilon\geq 1/2^{k-1}$.  Assume also that $|w|=n$. If $w\in
A_{yes}(m)$, the loop $\mathbf{2}$ always changes the quantum
state $|q_0\rangle$ to $|q_0\rangle$, and
$\mathcal{A}(m,\epsilon)$ never rejects after the measurement in
the step $\mathbf{3}$. After the loops $\mathbf{4}$ and
$\mathbf{5}$, the probability of $\mathcal{A}(m,\epsilon)$
accepting $w$ is $1/2^k(n+1)^2$. Repeating the loops $\mathbf{4}$
and $\mathbf{5}$ for $cn^2$ times, the accepting probability is
\begin{equation}
Pr[\mathcal{A}(m,\epsilon)\  accepts\  w]
=1-(1-\frac{1}{2^k(n+1)^2})^{cn^2},
\end{equation}
 and this can be made
arbitrarily close to 1 by selecting the constant $c$
appropriately.

Otherwise, if $|w|\in A_{no}(m)$, $\mathcal{A}(m,\epsilon)$
rejects the input after the steps $\mathbf{2}$ and $\mathbf{3}$
with probability
\begin{equation}
P_{r} >\frac{1}{2(m-n)^2+1}
\end{equation}
 according to Lemma \ref{m!=n}. $\mathcal{A}(m,\epsilon)$ accepts the input after the loops $\mathbf{4}$ and $\mathbf{5}$ with probability
\begin{equation}
P_{a}=1/2^k(n+1)^2\leq \epsilon/2(n+1)^2.
\end{equation}
If we repeat the whole algorithm indefinitely, the probability of
$\mathcal{A}(m,\epsilon)$ rejecting input $w$ is
\begin{equation}
Pr[\mathcal{A}(m,\epsilon)\  rejects\  w] =\sum_{i\geq
0}(1-P_a)^i(1-P_r)^iP_r
\end{equation}
\begin{equation}
=\frac{P_r}{P_a+P_r-P_aP_r}>\frac{P_r}{P_a+P_r}
\end{equation}
\begin{equation}
>\frac{1/(2(n-m)^2+1)}{\epsilon/2(n+1)^2+1/(2(n-m)^2+1)}
\end{equation}
\begin{equation}
=\frac{(n+1^2)/(2(n-m)^2+1)}{\epsilon/2+(n+1)^2/(2(n-m)^2+1)}
\end{equation}
Let
$f(x)=\frac{x}{\epsilon/2+x}=1-\frac{\epsilon}{(\epsilon+2x)}$,
then $f(x)$ is monotonous increasing in $(0, +\infty)$. By
assumption, we have $n=|w|\geq m/2$. So we have
$(n+1^2)/(2(n-m)^2+1)>1/2$. Therefore, we have
\begin{equation}
>\frac{1/2}{1/2+\epsilon/2}=\frac{1}{1+\epsilon}>1-\epsilon.
\end{equation}

If we assume the input is $w$, then the step $\mathbf{1}$ takes
$\mathbf{O}(1)$ time, the loop $\mathbf{2}$ and the step
$\mathbf{3}$ take $\mathbf{O}(|w|)$ time, and the loops
$\mathbf{4}$ and $\mathbf{5}$ take $\mathbf{O}(|w|^2)$ time. The
expected number of repeating the algorithm
 is $\mathbf{O}(|w|^2)$. Hence, the expected running
time of $\mathcal{A}(m,\epsilon)$ is $\mathbf{O}(|w|^4)$.
Obviously, $QS(\mathcal{A}(m,\epsilon))=2$.  We just need
$\mathbf{O}(k)$ classical states to simulate $k$ coin-flips and
calculate the outcomes, therefore $CS(\mathcal{A}(m,\epsilon))\in
\mathbf{O}(\log{\frac{1}{\epsilon})}$.
\end{proof}

\begin{Th}\label{meq}
For any $m\in {\mathbb{Z}}^+$ and any $\epsilon<1/2$, there exists
a 2QCFA $\mathcal{A}(m,\epsilon)$ which accepts any $w\in
A^{eq}_{yes}(m)$ with certainty, and rejects any $w\in
A^{eq}_{no}(m)$ with probability at least $1-\epsilon$, where
$QS(\mathcal{A}(m,\epsilon))$ is a constant and
$CS(\mathcal{A}(m,\epsilon))\in
\mathbf{O}(\log{\frac{1}{\epsilon})}$. Furthermore, we have
$T(\mathcal{A}(m,\epsilon))\in\mathbf{O}(|w|^4)$ where $w$ is the
input.
\end{Th}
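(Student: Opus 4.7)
The key observation is that $A^{eq}_{yes}(m)=\{a^m b^m\}$ factors as the intersection $L^{eq}\cap A_{yes}(2m)$, where $L^{eq}=\{a^k b^k : k\in\mathbb{N}\}$ is the language handled by Lemma \ref{Leq} and $A_{yes}(2m)=\{w\in\{a,b\}^*:|w|=2m\}$ is the ``yes'' side of the promise problem handled by Theorem \ref{Am}. So the plan is to build $\mathcal{A}(m,\epsilon)$ by running, in the style of Lemma \ref{Lm_2qcfa_inter}, two sub-automata in sequence: $\mathcal{A}_1$, the 2QCFA of Lemma \ref{Leq} for $L^{eq}$ with one-sided error parameter $\epsilon_1$, and $\mathcal{A}_2$, the 2QCFA of Theorem \ref{Am} for the promise problem $A(2m)=(A_{yes}(2m),A_{no}(2m))$ with one-sided error parameter $\epsilon_2$.

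The correctness analysis splits naturally into two cases. If $w\in A^{eq}_{yes}(m)$, then $w=a^m b^m$ lies in both $L^{eq}$ and $A_{yes}(2m)$, so each sub-automaton accepts with certainty by the one-sided guarantee, and hence the combined machine accepts with certainty. If $w\in A^{eq}_{no}(m)$, so that $w\neq a^m b^m$ and $|w|\geq m$, there are two sub-cases. If $|w|\neq 2m$, then $|w|\geq m=(2m)/2$ places $w$ in $A_{no}(2m)$, so $\mathcal{A}_2$ rejects with probability at least $1-\epsilon_2$. Otherwise $|w|=2m$, and the only string of length $2m$ in $L^{eq}$ is $a^m b^m$ itself, so $w\notin L^{eq}$ and $\mathcal{A}_1$ rejects with probability at least $1-\epsilon_1$. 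In either sub-case at least one sub-automaton rejects with high probability, so the combined machine rejects with the desired probability; choosing $\epsilon_1=\epsilon_2=\epsilon/2$ (or any values with $\epsilon_1+\epsilon_2-\epsilon_1\epsilon_2\leq\epsilon$) gives overall error at most $\epsilon$.

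The resource bounds then come for free from the remark following Lemma \ref{Lm_2qcfa_inter}: $QS(\mathcal{A}(m,\epsilon))=QS(\mathcal{A}_1)+QS(\mathcal{A}_2)$, which is constant because both summands are constant (and in particular independent of $m$), while $CS(\mathcal{A}(m,\epsilon))=CS(\mathcal{A}_1)+CS(\mathcal{A}_2)+QS(\mathcal{A}_1)\in\mathbf{O}(\log\frac{1}{\epsilon})$. Both $\mathcal{A}_1$ and $\mathcal{A}_2$ have expected running time $\mathbf{O}(|w|^4)$, and running them sequentially (with a constant expected number of repetitions, as in the proofs of the underlying lemmas) preserves $\mathbf{O}(|w|^4)$ expected time.

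The only genuinely non-routine step is sub-case (b) of the rejection analysis, which is really just the elementary remark that $\{a^m b^m\}=L^{eq}\cap\{|w|=2m\}$; the rest is assembly from results already established. No separate proof of the intersection construction is needed, since Lemma \ref{Lm_2qcfa_inter} and its remark supply precisely the state-count and error accounting we require, and the promise of $A(2m)$ is satisfied on all inputs in $A^{eq}_{yes}(m)\cup A^{eq}_{no}(m)$ (those inputs have length either exactly $2m$ or at least $m$), so applying the lemma in this promise setting is legitimate.
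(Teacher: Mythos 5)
Your proposal is correct and follows essentially the same route as the paper: decompose $A^{eq}(m)=L^{eq}\cap A(2m)$, combine the 2QCFA of Lemma \ref{Leq} and Theorem \ref{Am} via Lemma \ref{Lm_2qcfa_inter} with $\epsilon_1=\epsilon_2=\epsilon/2$, and read off the state and time bounds from the accompanying remark. Your explicit case analysis of why the promise of $A(2m)$ is satisfied on $A^{eq}_{yes}(m)\cup A^{eq}_{no}(m)$, and your note that the intersection lemma must be extended from languages to this promise setting, are exactly the points the paper handles (the latter in a remark immediately after the theorem).
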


\begin{proof}
Let the alphabet $\Sigma=\{a,b\}$. Obviously,
$A^{eq}(m)=L^{eq}\cap A(2m)$. According to Lemma \ref{Leq}, for
any $\epsilon_1>0$, there is a 2QCFA $\mathcal{A}_1(\epsilon_1)$
recognizes $L^{eq}$ with one-sided error $\epsilon_1$, and
$QS(\mathcal{A}_1(\epsilon_1))=2$,
$CS(\mathcal{A}_1(\epsilon_1))\in
\mathbf{O}(\log{\frac{1}{\epsilon_1})}$ and
$T(\mathcal{A}_1(\epsilon_1))\in\mathbf{O}(|w|^4)$. According to
Theorem \ref{Am}, for any $\epsilon_2>0$, there is a 2QCFA
$\mathcal{A}_2(m,\epsilon_2)$ that solves the promise problem
$A(2m)$ with one-sided error $\epsilon_2$, and
$QS(\mathcal{A}_2(m,\epsilon_2))=2$,
$CS(\mathcal{A}_2(m,\epsilon_2))\in
\mathbf{O}(\log{\frac{1}{\epsilon_2})}$ and
$T(\mathcal{A}_2(m,\epsilon_2))\in\mathbf{O}(|w|^4)$. For any
$\epsilon<1/2$, let $\epsilon_1=\epsilon/2$ and
$\epsilon_2=\epsilon/2$. According to Lemma \ref{Lm_2qcfa_inter},
there is a 2QCFA $\mathcal{A}(m,\epsilon)$ solves the promise
problem $L^{eq}\cap A(2m)$ with a one-sided error $\epsilon$,
where $QS(\mathcal{A}(m,\epsilon))=QS(\mathcal{A}_1(\epsilon_1))+
QS(\mathcal{A}_2(m,\epsilon_2))=4$,
$CS(\mathcal{A}(m,\epsilon))=CS(\mathcal{A}_1(\epsilon_1))+CS(\mathcal{A}_2(m,\epsilon_2))+QS(\mathcal{A}_1(\epsilon_1))\in
\mathbf{O}(\log{\frac{1}{\epsilon})}$ and
$T(\mathcal{A}(m,\epsilon))=T(\mathcal{A}_1(\epsilon_1))+T(\mathcal{A}_2(m,\epsilon_2))\in\mathbf{O}(|w|^4)$.
Hence, the theorem has been proved.
\end{proof}

\begin{Rm}
Actually, $L_1$ and $L_2$ must be  languages in Lemma
\ref{Lm_2qcfa_inter}. But in Theorem \ref{meq}, we used a promise
problem $A(2m)$. It is easy to show that Lemma
\ref{Lm_2qcfa_inter} still holds for promise problem $A(2m)$ and
language $L^{eq}$. We used Lemma \ref{Lm_2qcfa_inter} to prove
Theorem \ref{meq} in this section. However, we can prove Theorem
\ref{meq} directly.
\end{Rm}

Obviously, there exists a DFA depicted in Figure \ref{f1} that
solves the promise problem $A^{eq}(m)$ with 2m+2 states.
\begin{figure}
  \centering

  \setlength{\unitlength}{1cm}
\begin{picture}(30,4)\thicklines
\put(2,4){\circle{1}\makebox(0,0){$p_{0}$}}
\put(4,4){\circle{1}\makebox(0,0){$p_{1}$}}
\put(6,4){\makebox(0,0){$\cdots$}}
\put(8,4){\circle{1}\makebox(0,0){$p_{m-1}$}}

\put(10,4){\circle{1}\makebox(0,0){$p_m$}}
\put(12,4){\circle{1}\makebox(0,0){$q_1$}}

\put(13.9,4){\makebox(0,0){$\cdots$}}

\put(15.5,4){\circle{1}\circle{0.9}\makebox(0,0){$q_m$}}

\put(0.5,4){\vector(1,0){1}\makebox(-1,0.5){Start}}
\put(2.5,4){\vector(1,0){1}\makebox(-1,0.5){$a$}}
\put(4.5,4){\vector(1,0){1}\makebox(-1,0.5){$a$}}
\put(6.5,4){\vector(1,0){1}\makebox(-1,0.5){$a$}}
\put(8.5,4){\vector(1,0){1}\makebox(-1,0.5){$a$}}
\put(10.5,4){\vector(1,0){1}\makebox(-1,0.5){$b$}}
\put(12.5,4){\vector(1,0){1}\makebox(-1,0.5){$b$}}
\put(14.15,4){\vector(1,0){0.85}\makebox(-1,0.5){$b$}}

\put(7,1){\circle{1}\makebox(0,0){$r$}} \qbezier(6.64,0.64)
(7,-0.5) (7.25,0.44)
\put(7.25,0.44){\vector(1,2){0.1}\makebox(0.2,-0.8){$a,b$}}

\put(2,3.5){\vector(2,-1){4.55}\makebox(-6,-2){$b$}}
\put(3,3.5){\makebox(7,-2){$\cdots$}}
\put(4,3.5){\vector(4,-3){2.72}\makebox(-3.5,-2){$b$}}
\put(8,3.5){\vector(-1,-2){1}\makebox(1.2,-2){$b$}}
\put(10,3.5){\vector(-4,-3){2.7}\makebox(3,-2){$a$}}

\put(12,3.5){\vector(-2,-1){4.5}\makebox(5,-2){$a$}}
\put(13,3.5){\makebox(-4,-2){$\cdots$}}
\put(15.5,3.5){\vector(-3,-1){8}\makebox(8,-2){$a$}}

\qbezier(15.5,3.5) (9,-0.5) (7.5,0.8)
\put(7.5,0.8){\vector(-1,-1){0.1}\makebox(7,1.6){b}}

\end{picture}
  \centering\caption{DFA $\mathcal{A}(m)$ solving $A^{eq}(m)$}\label{f1}
\end{figure}

\begin{Th}\label{DFAforAmeq}
For any $m\in {\mathbb{Z}}^+$, any DFA solving the promise problem
$A^{eq}(m)$ has at least $2m+2$ states.
\end{Th}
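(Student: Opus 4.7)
The plan is to use a fooling-set / distinguishability argument adapted to the promise setting. In this setting, two strings $u$ and $v$ must be driven to distinct states of any correct DFA whenever there is a suffix $z$ such that $uz \in A^{eq}_{yes}(m)$ and $vz \in A^{eq}_{no}(m)$, or vice versa (if $u$ and $v$ ended in the same state, then $uz$ and $vz$ would receive the same verdict, contradicting correctness). I will first exhibit $2m+1$ pairwise distinguishable strings, and then force one additional ``dead'' state, giving the desired bound of $2m+2$.

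The natural $2m+1$ witnesses are the distinct prefixes of the unique yes-instance $a^{m}b^{m}$, namely
$P = \{a^{i} : 0 \leq i \leq m\} \cup \{a^{m}b^{j} : 1 \leq j \leq m\}$. For any two of these prefixes, I will choose a suffix that completes one of them to exactly $a^{m}b^{m}$ while extending the other into a string of length at least $m$ that differs from $a^{m}b^{m}$ (so it lies in $A^{eq}_{no}(m)$). Concretely, for $a^{i}$ versus $a^{j}$ with $i < j \leq m$ I would use $z = a^{m-j}b^{m}$; for $a^{m}b^{i}$ versus $a^{m}b^{j}$ with $1\leq i < j\leq m$, the suffix $z = b^{m-j}$; for $a^{i}$ with $i<m$ versus $a^{m}b^{j}$ with $j\geq 1$, the suffix $z = a^{m-i}b^{m}$; and for $a^{m}$ versus $a^{m}b^{j}$ with $j \geq 1$, the suffix $z = b^{m}$. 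Each case reduces to a short arithmetic check that the length condition $|wz|\geq m$ holds and that the extended string indeed differs from $a^{m}b^{m}$, so the $2m+1$ prefixes occupy $2m+1$ distinct states.

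Finally, let $q^{*}$ denote the (accepting) state reached after reading $a^{m}b^{m}$, and let $s$ denote the state reached after reading $a^{m}b^{m}a$. Every continuation $a^{m}b^{m}az$ has length at least $m$ and is manifestly different from $a^{m}b^{m}$, so it lies in $A^{eq}_{no}(m)$ and must be rejected. Hence every string read from $s$ must terminate in a rejecting state; in particular $s$ itself is rejecting (take $z=\epsilon$) and no computation from $s$ can ever reach the accepting state $q^{*}$. On the other hand, each of the $2m+1$ prefix states in $P$ can reach $q^{*}$ by an appropriate extension (precisely the suffixes used above). Therefore $s$ must differ from all $2m+1$ prefix states, forcing an extra state and yielding the bound of at least $2m+2$. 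The main obstacle is that the usual Myhill--Nerode distinguishability argument must be carefully adapted to the promise setting, where the DFA is unconstrained on strings outside $A^{eq}_{yes}(m)\cup A^{eq}_{no}(m)$; this is what dictates the specific form of the distinguishing suffixes, and it is also why the extra ``dead'' state argument needs to exhibit a transition whose continuations are all forced to lie in $A^{eq}_{no}(m)$.
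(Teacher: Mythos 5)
Your proposal is correct and follows essentially the same route as the paper: the same $2m+1$ prefixes of $a^{m}b^{m}$ are shown pairwise distinguishable (the paper does this via a pumping argument on a hypothetical colliding pair, while you exhibit explicit distinguishing suffixes, which is if anything cleaner in the promise setting), and one extra dead state is forced by a string all of whose extensions lie in $A^{eq}_{no}(m)$ (the paper uses $a^{m}b^{m+1}$ where you use $a^{m}b^{m}a$). All of your case checks on the suffixes and on the length condition $|wz|\geq m$ go through.
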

\begin{proof}
Let us consider the string set $W=\{a^0, a^1,\cdots,a^m, a^mb^1,
a^mb^2,\cdots, a^mb^m\}$, where $a^0$ is the empty string.
Obviously, for any two different strings $w_i, w_j\in W$, we have
$|w_i|\neq |w_j|$, and if $|w_i|<|w_j|$, then $w_i$ is a prefix of
$w_j$. For any string $x\in \Sigma^*$ and any $\sigma\in \Sigma$,
let $\widehat{\delta}(s,\sigma
x)=\widehat{\delta}(\delta(s,\sigma),x)$; if $|x|=0$,
$\widehat{\delta}(s,x)=s$ \cite{Hop79}. Assume that a $n$-state
DFA $\mathcal{A}(m)$ solves promise problem $A^{eq}(m)$. We show
that $n$ cannot be less than $2m+2$.

Assume that $s_0$ is the initial state of $\mathcal{A}(m)$, and
that there are two different strings $w_i, w_j\in W$ such that
$\widehat{\delta}(s_0,w_i)= \widehat{\delta}(s_0,w_j)$. Without a
lost of generality, we assume that $w_i$ is a prefix of $w_j$, so
there is a string $x$ such that $w_j=w_ix$, where $|x|\neq 0$. Let
$\widehat{\delta}(s_0,w_i)=s$, we have $\widehat{\delta}(s,
x)=\widehat{\delta}(s, x^*)=s$. Because $w_i$ is a prefix of
$a^mb^m$, there exists a string $y$ satisfies that
$\widehat{\delta}(s_0, w_iy)=\widehat{\delta}(s, y)=s_{acc}$,
where $s_{acc}$ is an accepting state. It follow
$\widehat{\delta}(s_0, w_ix^*y)=s_{acc}$. Therefore, there is some
$k\in {\mathbb{Z}}^+$ satisfy that $\widehat{\delta}(s_0,
w_ix^ky)=s_{acc}$ and $w_ix^ky\in A^{eq}_{no}(m)$, which is a
contradiction. Hence, for any two different strings $w_i, w_j\in
W$ satisfy that $\widehat{\delta}(s_0,w_i)\neq
\widehat{\delta}(s_0,w_j)$.

For any $w_i\in W$,  $\widehat{\delta}(s_0, w_i)$ is a reachable
state (i.e., there exists a string $z$ such that
$\widehat{\delta}(\widehat{\delta}(s_0, w_i),z)$ is an accepting
state). Therefore, there must be at least one state that is not
          reachable, for example, $\widehat{\delta}(s_0, a^mb^{m+1})$. There is $2m+1$ elements in the set $W$ and at least one not reachable state. So any DFA solving the promise problem $A^{eq}(m)$ has at least $2m+2$ states.

\end{proof}

\begin{Th}\label{S-result-2}
For any $m\in {\mathbb{Z}}^+$, any 2DFA, 2NFA and any polynomial
expected running time 2PFA solving the promise problem $A^{eq}(m)$
has at least $\sqrt{\log{m}}$, $\sqrt{\log{m}}$ and
$\sqrt[3]{(\log m)/b}$ states, where $b$ is a constant.
\end{Th}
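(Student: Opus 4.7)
The plan is to reduce each of the three lower bounds to the DFA lower bound already established in Theorem \ref{DFAforAmeq} by invoking the standard simulation lemmas that convert two-way automata into DFAs. In each case, if a given two-way machine $\mathcal{A}$ solves the promise problem $A^{eq}(m)$, then the DFA produced by the corresponding simulation still solves $A^{eq}(m)$ (the simulation preserves accept/reject behaviour on every input in $A^{eq}_{yes}(m) \cup A^{eq}_{no}(m)$), hence must have at least $2m+2$ states by Theorem \ref{DFAforAmeq}. Comparing this with the upper bound on the simulator size yields a lower bound on the state count of $\mathcal{A}$.

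Concretely, for the 2DFA case suppose $\mathcal{A}$ has $n$ states. Lemma \ref{2DFAtoDFA} gives an equivalent DFA with $(n+1)^{n+1}$ states, so $(n+1)^{n+1} \geq 2m+2$. Using the crude estimate $(n+1)^{n+1} \leq 2^{(n+1)^2}$ (which holds because $\log_2(n+1) \leq n+1$), I get $(n+1)^2 \geq \log_2(2m+2) \geq \log m$, whence $n \geq \sqrt{\log m}$ (up to the trivial $-1$ absorbed asymptotically). The 2NFA case is entirely analogous via Lemma \ref{2NFAtoDFA}: $2^{(n-1)^2+n} \geq 2m+2$ gives $(n-1)^2 + n \geq \log m$, and since $(n-1)^2 + n \leq n^2$ for $n \geq 1$, we obtain $n \geq \sqrt{\log m}$.

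For the 2PFA bound, assume a $c$-state 2PFA $\mathcal{A}$ solves $A^{eq}(m)$ with error $\epsilon < 1/2$ in polynomial expected time. Applying Lemma \ref{2PFAtoDFA} produces a DFA with at most $c^{bc^2}$ states that solves $A^{eq}(m)$; combined with Theorem \ref{DFAforAmeq} this yields $c^{bc^2} \geq 2m+2$. Taking logarithms and using $\log c \leq c$, we have $bc^3 \geq bc^2 \log c \geq \log m$, so $c \geq \sqrt[3]{(\log m)/b}$, as claimed.

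The only genuinely delicate point is that Lemma \ref{2PFAtoDFA} is stated for languages, whereas $A^{eq}(m)$ is a promise problem. I expect this to be the main subtlety of the proof. However, the underlying Dwork--Stockmeyer argument only uses that the gap between acceptance and rejection probabilities is bounded away from $0$ on the inputs being distinguished; it does not rely on $\mathcal{A}$ having a well-defined behaviour on $\Sigma^*\setminus (A^{eq}_{yes}(m)\cup A^{eq}_{no}(m))$. Thus the lemma applies to our situation once we restrict attention to promised inputs, and the resulting DFA separates $A^{eq}_{yes}(m)$ from $A^{eq}_{no}(m)$, i.e., solves the promise problem. With this observation, the three bounds follow uniformly.
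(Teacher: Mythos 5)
Your proposal is correct and follows essentially the same route as the paper: reduce each case to the DFA lower bound of Theorem \ref{DFAforAmeq} via Lemmas \ref{2DFAtoDFA}, \ref{2NFAtoDFA} and \ref{2PFAtoDFA}, then solve for the state count (the paper, like you, silently treats the simulation lemmas as applying to promise problems since the simulating DFA agrees with the two-way machine on every input). The only quibble is that your crude estimate $(n+1)^{n+1}\leq 2^{(n+1)^2}$ in the 2DFA case yields $n+1\geq\sqrt{\log m}$ rather than the stated $n\geq\sqrt{\log m}$; the paper closes this by observing $n\geq 3$, whence $(n+1)\log(n+1)<n^2$ and the clean bound follows.
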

\begin{proof}

Assume that an $n_1$-state 2DFA $\mathcal{A}$ solves the promise
problem $A^{eq}(m)$. It is easy to prove that $n_1\geq 3$.
According to Lemma \ref{2DFAtoDFA}, there is a DFA that solves the
promise problem $A^{eq}(m)$ with $(n_1+1)^{n_1+1}$ states.
According to Theorem \ref{DFAforAmeq}, we have
\begin{equation}
(n_1+1)^{n_1+1}\geq 2m+2 \Rightarrow (n_1+1)\log{(n_1+1)}>
\log{m}+1.
\end{equation}
Because $n_1\geq 3$, we get
\begin{equation}
n_1^2>(n_1+1)\log{(n_1+1)}> \log{m} \Rightarrow n> \sqrt{\log{m}}.
\end{equation}

Assume that an $n_2$-state 2NFA $\mathcal{A}$ solves the the
promise problem $A^{eq}(m)$.  According to Lemma \ref{2NFAtoDFA},
there is a DFA that solves the promise problem $A^{eq}(m)$ with
$2^{(n_2-1)^2+n_2}$ states. According to Theorem \ref{DFAforAmeq},
we have
\begin{equation}
2^{(n_2-1)^2+n_2}\geq 2m+2 \Rightarrow (n_2-1)^2+n_2> \log{m}+1
\end{equation}
\begin{equation}
\Rightarrow n_2^2> \log{m} \Rightarrow n_2> \sqrt{\log{m}}.
\end{equation}

Assume that an $n_3$-state 2PFA $\mathcal{A}$ solves the promise
problem $A^{eq}(m)$ with the error probability $\epsilon<1/2$ and
within a polynomial expected running time. According to Lemma
\ref{2PFAtoDFA}, there is a DFA that solves the promise problem
$A^{eq}(m)$ with $n_3^{bn_3^2}$ states, where $b>0$ is a constant.
According to Theorem \ref{DFAforAmeq}, we have
\begin{equation}
n_3^{bn_3^2}\geq 2m+2 \Rightarrow bn_3^2\log{n_3}> \log{m}
\end{equation}
\begin{equation}
\Rightarrow n_3^3> (\log{m})/ b\Rightarrow n_3>\sqrt[3]{(\log
m)/b}.
\end{equation}

\end{proof}

\section{State succinctness of 2QCFA}

For the alphabet $\Sigma=\{a,b,c\}$ and any $m\in {\mathbb{Z}}^+$,
let $L^{twin}(m)=\{wcw| w\in\{a,b\}^*,|w|=m\}$. For any
$\epsilon<1/2$, we will prove that $L^{twin}(m)$ can be recognized
by a 2QCFA  with one-sided error $\epsilon$ in an exponential
expected running time with a constant number of quantum states and
$\mathbf{O}(\log{\frac{1}{\epsilon})}$ classical states. The
language $L^{twin}=\{wcw | w\in\{a,b\}^*\}$ over alphabet
$\Sigma=\{a,b,c\}$ was declared as being recognized by a 2QCFA by
Yakaryilmaz and Cem Say \cite{Yak10}. However, they did not give
details of such a 2QCFA. In the following, we will show such an
automaton and its behavior in details.

\begin{Th}\label{Ltwin}
For any $\epsilon<1/2$, there exists a 2QCFA
$\mathcal{A}(\epsilon)$ which accepts any $w\in L^{twin}$ with
certainty,  rejects any $w\notin L^{twin}$ with probability at
least $1-\epsilon$, and halts in exponential expected time,  where
$QS(\mathcal{A}(\epsilon))$=3 and $CS(\mathcal{A}(\epsilon))\in
\mathbf{O}(\log{\frac{1}{\epsilon})}$.
\end{Th}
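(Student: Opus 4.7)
I follow the three-step recipe used for Theorem~\ref{Am}: decompose $L^{twin}$ as an intersection of simpler conditions, construct a 2QCFA for each piece, and glue them via Lemma~\ref{Lm_2qcfa_inter}. Write $L^{twin}=L_{\mathrm{form}}\cap L_{\mathrm{hash}}$, where $L_{\mathrm{form}}$ is the regular language of strings in $\{a,b\}^{*}c\{a,b\}^{*}$ (exactly one occurrence of $c$), recognized by a trivial 2DFA, and
\[
L_{\mathrm{hash}}=\{ucv : u,v\in\{a,b\}^{*},\ N(u)=N(v)\},
\]
with the hash $N(w)=\sum_{i=1}^{|w|}\sigma_{i}\cdot 3^{i-1}$ induced by the encoding $a\mapsto 1$, $b\mapsto 2$. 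Since every base-$3$ digit of $N(w)$ lies in $\{1,2\}$ and strings of different lengths $n$ occupy disjoint numerical ranges, the map $w\mapsto N(w)$ is injective, so $L_{\mathrm{form}}\cap L_{\mathrm{hash}}=L^{twin}$.

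To build a 2QCFA for $L_{\mathrm{hash}}$ with two quantum states $\{|q_{0}\rangle,|q_{1}\rangle\}$, I use the rotation $U_{\alpha}$ by angle $\alpha=\sqrt{2}\pi$. The main computation applies $U_{\alpha}$ exactly $N(u)$ times while sweeping the pre-$c$ segment, then $U_{\alpha}^{-1}$ exactly $N(v)$ times while sweeping the post-$c$ segment, and measures in the computational basis. The outcome is $|q_{1}\rangle$ with probability $\sin^{2}((N(u)-N(v))\alpha)$, which vanishes iff $u=v$ (so acceptance is certain when $u=v$), and by the closest-integer argument used in the proof of Theorem~\ref{Am} is at least $\Omega(1/(N(u)-N(v))^{2})\geq\Omega(9^{-|w|})$ otherwise. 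Wrapping this measurement inside the random-walk amplifier already developed for Theorem~\ref{Am}, with the number of amplification rounds chosen as $\Theta(9^{|w|})$, boosts the rejection probability above $1-\epsilon/2$ while keeping the classical-state budget at $\mathbf{O}(\log 1/\epsilon)$.

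Applying $U_{\alpha}$ exactly $N(w)$ times is organised position-by-position: at head position $i$ of the current segment the machine invokes a subroutine that performs $\sigma_{i}\cdot 3^{i-1}$ rotations and returns the head to position $i$. The factor $3^{i-1}$ is generated by a self-similar tape walk over positions $1,\ldots,i$ implementing the recursion $T_{i}=T_{i-1}T_{i-1}T_{i-1}$ with $T_{1}=\,$``apply $U_{\alpha}$ once''; the head position itself, together with the end-markers and the delimiter $c$, serves as the implicit top-of-stack, so no extra counter memory is needed. Summing over positions yields $O(3^{|w|})$ elementary steps per pass, matching the exponential bound. Invoking Lemma~\ref{Lm_2qcfa_inter} on the 2DFA for $L_{\mathrm{form}}$ and the 2QCFA for $L_{\mathrm{hash}}$ with error budget $\epsilon/2$ each then produces a 2QCFA for $L^{twin}$ with one-sided error $\epsilon$, $QS=3$, $\mathbf{O}(\log 1/\epsilon)$ classical states, and exponential expected running time.

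The delicate step will be the odometer of the previous paragraph: a literal recursive reading of $T_{i}=T_{i-1}T_{i-1}T_{i-1}$ asks for a stack of depth $|w|$, but 2QCFA control has only finitely many classical states. The plan is to unfold this recursion into a flat deterministic tape traversal whose current ``level'' is encoded by the head position, using the end-markers and the symbol $c$ as positional landmarks. Verifying that the unfolded walk applies $U_{\alpha}$ exactly the right number of times and returns the head precisely to position $i$ is the main new piece of bookkeeping, more elaborate than but analogous to the position-marker manipulations already carried out in the proof of Theorem~\ref{Am}.
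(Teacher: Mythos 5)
Your decomposition into $L_{\mathrm{form}}\cap L_{\mathrm{hash}}$ and the injectivity of the base-$3$ hash are fine, and the phase/closest-integer analysis would go through if the machine could really apply $U_{\alpha}$ exactly $N(u)$ times. But that is precisely where the construction breaks, and the break is not a bookkeeping issue that more care can repair: between measurements, the classical part of a 2QCFA evolves exactly like a 2DFA, so its configuration is a pair (classical state, head position) and there are only $|S|\cdot(|w|+2)$ such configurations. Any deterministic phase that runs longer than that must revisit a configuration and therefore loops forever. Consequently no finite-state two-way head can deterministically execute $3^{i-1}=\Theta(3^{|w|})$ prescribed rotations at position $i$; the ``self-similar tape walk'' realizing $T_{i}=T_{i-1}T_{i-1}T_{i-1}$ genuinely needs a stack of depth $i$ with three branches per level, i.e.\ $\Omega(|w|)$ bits of control memory, while the head position supplies only $O(\log|w|)$ bits. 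Introducing randomness to escape the loop destroys the exactness of the rotation count on which the $\sin^{2}((N(u)-N(v))\alpha)$ analysis depends. So the odometer you flag as ``the delicate step'' is in fact impossible for this model.

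The paper's proof sidesteps counting entirely: it uses three quantum states and the $3\times3$ integer matrices $A,B$ of Eq.~(\ref{matrix}), applying $U_{x_i}=\frac{1}{5}A$ or $\frac{1}{5}B$ \emph{once per symbol} on the left segment and $U_{y_j}^{-1}$ once per symbol (right to left) on the right segment. The fact that the resulting state returns to $|q_0\rangle$ iff $x=y$, with separation at least $5^{-(n+m)}$ otherwise, comes from the arithmetic of these matrices modulo $5$ (Lemmas~\ref{Au}--\ref{XY}), not from any exponential-length walk; the exponential expected time enters only through the $2^{-k(n+m+1)}$ acceptance probability of the coin-flip amplifier. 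If you want to salvage a two-quantum-state rotation approach, you would need per-symbol angles whose partial sums separate distinct words --- which is essentially what the $3$-dimensional construction achieves algebraically --- rather than an exponential repetition count that the control structure cannot supply.
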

\begin{proof}
Let us consider $3\times3$ matrixes $U_{a}$ and $U_{b}$ defined as
follows:

\begin{equation}\label{matrix}
A=\left(
  \begin{array}{ccc}
      4 & 3 & 0  \\
     -3 & 4 & 0  \\
      0 & 0 & 5
  \end{array}
\right), B=\left(
  \begin{array}{ccc}
      4 & 0 & 3 \\
      0 & 5 & 0  \\
      -3& 0 & 4 \\
  \end{array}
\right).
\end{equation}
 We now describe formally a 2QCFA $\mathcal{A}(\epsilon)$ that is described less formally in Figure \ref{t2} with 3 quantum states $\{|q_0\rangle, |q_1\rangle, |q_2\rangle\}$, with $|q_0\rangle$ being the initial state.  $\mathcal{A}(\epsilon)$  has two unitary operators $U_{a}=\frac{1}{5}A$ and $U_{b}=\frac{1}{5}B$ given in Eq. (\ref{matrix}). They can also be described as follows: \\

\begin{tabular}{|l|l|l|}
  \hline
  $U_{a}|q_0\rangle=\frac{4}{5}|q_0\rangle-\frac{3}{5}|q_1\rangle$ & $U_{b}|q_0\rangle=\frac{4}{5}|q_0\rangle-\frac{3}{5}|q_2\rangle$  \\
  $U_{a}|q_1\rangle=\frac{3}{5}|q_0\rangle+\frac{4}{5}|q_1\rangle$ & $U_{b}|q_1\rangle=|q_1\rangle$                                    \\
  $U_{a}|q_2\rangle=|q_2\rangle$                                   & $U_{b}|q_2\rangle=\frac{3}{5}|q_0\rangle+\frac{4}{5}|q_2\rangle$  \\

  \hline
\end{tabular}\\

\begin{figure}
  \centering
  \begin{tabular}{|l|}
  \hline
Check whether the input is of the form $xcy$ ($x,y\in\{a,b\}^*$). If not, reject.\\
Otherwise, repeat the following ad infinitum:\\
\ \ 1. Move the tape head to the first input symbol and set the quantum state to $|q_0\rangle$.\\
\ \ 2. Until the currently scanned symbol $\sigma$ is $c$, do the following:\\
\ \ \ \ (2.1).Perform $U_{\sigma}$ on the quantum state.\\
\ \ \ \ (2.2).Move the tape head one square to the right.\\
\ \ 3. Move the tape head to the last input symbol.\\
\ \ 4. Until the currently scanned symbol $\sigma$ is $c$, do the following:\\
\ \ \ \ (4.1).Perform $U_{\sigma}^{-1}$ on the quantum state.\\
\ \ \ \ (4.2).Move the tape head one square to the left.\\
\ \ 5. Measure the quantum state. If the result is not $|q_0\rangle$, reject.\\
\ \ 6.  Move the tape head to the last input symbol and set $b=0$.\\
\ \ 7. While the currently scanned symbol is not $\ |\hspace{-1.5mm}c$, do the following:\\
\ \ \ \ (7.1). Simulate $k$ coin-flips. Set $b=1$ in case all results are not ``heads".\\
\ \ \ \ (7.2). Move the tape head one square to the left. \\
\ \ 8. If $b=0$, accept.\\

  \hline
\end{tabular}\\
  \centering\caption{Informal description of the actions of a 2QCFA
                   for $L^{twin}$. The choice of $k$ will depend on $\epsilon$.}\label{t2}
\end{figure}

We now summarize some concepts and results from \cite{Amb02} that
we will use to prove the theorem. For $u\in \mathbb{Z}^3$, we use
$u[i]\ (i=1,2,3)$ to denote the $i$th entry of $u$. We define a
function $f:\mathbb{Z}^3\rightarrow\mathbb{Z}$ as
\begin{equation}
 f(u)=4u[1]+3u[2]+3u[3]
\end{equation}
for each $u\in \mathbb{Z}^3$, and we define a set
$K\subseteq\mathbb{Z}^3$ as
\begin{equation}
K=\{u\in \mathbb{Z}^3: u[1]\not\equiv 0(mod \ 5),
f(u)\not\equiv0(mod\ 5), {\it and } \ u[2]\cdot u[3]\equiv0(mod\
5)\}
\end{equation}

\begin{Lm}[\cite{Amb02}]\label{Au}
If $u\in K$, then $Au\in K$ and $Bu\in K$.
\end{Lm}

\begin{Lm}[\cite{Amb02}]\label{u-notin-K}
If an $u\in \mathbb{Z}^3$ is such that $u=Av=Bw$ for some $v,w\in
\mathbb{Z}^3$, then $u\notin K$.
\end{Lm}

\begin{Lm} \label{no-eq-q0}
If $u\in K$, there does not exist an $l\in \mathbb{Z}^+$ such that
$Xu=\pm 5^l(1, 0 ,0)^T$, where $X\in \{A, B\}$.
\end{Lm}
\begin{proof}
Suppose there is an $l\in \mathbb{Z}^+$ such that $Xu=\pm 5^l(1, 0
,0)^T$. Assume that $X=A$ (the proof for $X=B$ is similar), then
it holds
\begin{equation}
 Xu=Au=\left(
  \begin{array}{ccc}
      4 & 3 & 0 \\
      -3 & 4 & 0  \\
      0& 0 & 5 \\
  \end{array}
\right)\left(
  \begin{array}{c}
      u[1]  \\
      u[2]  \\
      u[3] \\
  \end{array}
\right)=\left(
  \begin{array}{c}
      4u[1]+3u[2]  \\
      -3u[1]+4u[2]  \\
      5u[3] \\
  \end{array}
\right)=\pm\left(
  \begin{array}{c}
      1  \\
      0  \\
      0 \\
  \end{array}
\right)5^l
\end{equation}

\begin{equation}
\Rightarrow \left(
  \begin{array}{c}
      u[1]  \\
      u[2]  \\
      u[3] \\
  \end{array}
\right)=\pm\left( \begin{array}{c}
      4\cdot 5^{l-2} \\
      3\cdot 5^{l-2} \\
      0 \\
  \end{array} \right).
\end{equation}
Since $4u[1]+3u[2]+3u[3]=\pm(16\cdot 5^{l-2}+9\cdot 5^{l-2})=\pm
5^{l}$, we conclude $f(u)\equiv0(mod\ 5)$. We get that $u\notin
K$, which contradicts the fact that $u\in K$. Hence, the Lemma has
been proved.
\end{proof}

\begin{Co}\label{Co-u-neq-q0}
Let
\begin{equation}
u=X_k\cdots X_1(1, 0, 0)^T,
\end{equation}
where $X_i\in \{A, B\}$. Then $u=\pm 5^l(1, 0, 0)^T$ for
         no $l\in \mathbb{Z}^+$.
\end{Co}
\begin{proof}
Clearly, $(1, 0, 0)^T\in K$. According to Lemma \ref{Au},
$X_{k-1}\cdots X_1(1, 0, 0)^T\in K$. According to Lemma
\ref{no-eq-q0}, there does not exist $l\in \mathbb{Z}^+$ such that
$u=\pm 5^l(1, 0, 0)^T$.
\end{proof}

\begin{Lm}\label{m>n}
Let
\begin{equation}
u=Y_1^{-1}\cdots Y_k^{-1}(1, 0, 0)^T,
\end{equation}
where $Y_i\in \{A, B\}$. Then $u=\pm \frac{1}{5^l}(1, 0, 0)^T$ for
         no $l\in \mathbb{Z}^+$.
\end{Lm}
\begin{proof}
Assume that there is an $l\in \mathbb{Z}^+$ satisfies that
$u=Y_1^{-1}\cdots Y_k^{-1}(1, 0, 0)^T=\pm \frac{1}{5^l}(1, 0,
0)^T$, then we get $Y_k\cdots Y_1(1, 0, 0)^T=\pm 5^l (1, 0, 0)^T$.
According to Corollary \ref{Co-u-neq-q0}, such $l$ does not exist.
\end{proof}

\begin{Lm}\label{XY}
Let
\begin{equation}
u=(5Y_1^{-1})\cdots (5Y_m^{-1})(5^{-1}X_n)\cdots
(5^{-1}X_1)(1,0,0)^T,
\end{equation}
where $X_j, Y_j\in \{A, B\}$. If $m=n$ and $X_j= Y_j$ for $1\leq
j\leq n$, then $u[2]^2+u[3]^2=0$. Otherwise,
$u[2]^2+u[3]^2>5^{-(n+m)}$.
\end{Lm}
\begin{proof}
If $m=n$ and $X_j= Y_j$ for $1\leq j\leq n$, then we have
\begin{equation}
 u=Y_1^{-1}\cdots Y_n^{-1}X_n\cdots X_1(1,0,0)^T=(1,0,0)^T,
\end{equation}
and thus $u[2]^2+u[3]^2=0$.

Otherwise, note that $||u||=1$, since $5^{-1}X_j$ and $5Y_j^{-1}$
are unitary for each $j$, and also note that $5^{(n+m)}u[i]\
(i=1,2,3)$ is an integer. It therefore suffices to prove that
$u\neq \pm (1,0,0)^T$. $|u[1]|<1$ implies $|u[1]|\leq
1-5^{-(n+m)}$, and therefore
\begin{equation}
u[2]^2+u[3]^2=1-u[1]^2\geq 1-(1-5^{-(n+m)})^2>5^{-(n+m)}.
\end{equation}
 We first prove the case that $n\geq m$. If $X_{n-j}=Y_{m-j}$ for $0\leq j\leq m-1$,
then
\begin{equation}
 u=(5Y_1^{-1})\cdots (5Y_m^{-1})(5^{-1}X_n)\cdots (5^{-1}X_1)(1,0,0)^T=5^{-(n-m)}X_{n-m}\cdots X_1(1,0,0)^T.
\end{equation}
According to Corollary \ref{Co-u-neq-q0}, for every $l\in
\mathbb{Z}^+$,
\begin{equation}
u=5^{-(n-m)}X_{n-m}\cdots X_1(1,0,0)^T\neq \pm 5^{-(n-m)}5^l
(1,0,0)^T.
\end{equation}
This implies that $u\neq\pm (1,0,0)^T$ if $l=n-m$.

Next suppose there exist an $i<m$ such that $X_{n-i}\neq Y_{m-i}$.
Let $k$ be the smallest integer such that $X_{n-k}\neq Y_{m-k}$,
and without loss of generality suppose $X_{n-k}=A, Y_{m-k}=B$.
Since $X_{n-j}=Y_{m-j}$ for $j<k$, we have
\begin{equation}
 u=(5Y_1^{-1})\cdots (5Y_m^{-1})(5^{-1}X_n)\cdots (5^{-1}X_1)(1,0,0)^T=5^{-(n-m)}Y^{-1}_1\cdots Y^{-1}_{m-k}X_{n-k}\cdots X_1(1,0,0)^T.
\end{equation}
For $u=(1,0,0)^T$, we get
\begin{equation}
u=5^{-(n-m)}Y^{-1}_1\cdots Y^{-1}_{m-k}X_{n-k}\cdots
X_1(1,0,0)^T=(1,0,0)^T
\end{equation}
\begin{equation} \label{e1}
\Rightarrow X_{n-k}\cdots X_1(1,0,0)^T=5^{(n-m)}Y_{m-k}\cdots
Y_{1}(1,0,0)^T=Y_{m-k}\cdots Y_{1} 5^{(n-m)} (1,0,0)^T
\end{equation}

Obviously, $(1,0,0)^T\in K$ and $5^{(n-m)}(1,0,0)^T\in K$. Let
$v=X_{n-k-1}\cdots X_1(1, 0 ,0)^T$ and $w=Y_{m-k-1}\linebreak[0]
\cdots\linebreak[0]
Y_1\linebreak[0]5^{(n-m)}\linebreak[0](1,0,0)^T$, according to
Lemma \ref{Au}, we have $v,w\in K$, $X_{n-k}v=Av\in K$, and
$Y_{m-k}w=Bw\in K$. By Lemma \ref{u-notin-K} this implies $Av\neq
Bw$, which contradicts the Equation \ref{e1}. From that we
conclude $u\neq (1,0,0)^T$. By similar reasoning we get that,
$u\neq -(1,0,0)^T$.

Now we deal with the case $n<m$. If $X_{n-j}=Y_{m-j}$ for $0\leq
j\leq n-1$, then
\begin{equation}
 u=(5Y_1^{-1})\cdots (5Y_m^{-1})(5^{-1}X_n)\cdots (5^{-1}X_1)(1,0,0)^T=5^{m-n}Y_1^{-1}\cdots Y_{m-n}^{-1}(1,0,0)^T.
\end{equation}
According to Lemma \ref{m>n}, for every $l\in \mathbb{Z}^+$,
\begin{equation}
u=5^{m-n}Y_1^{-1}\cdots Y_{m-n}^{-1}(1,0,0)^T\neq \pm 5^{m-n}
5^{-l} (1,0,0)^T.
\end{equation}
This implies that $u\neq \pm(1,0,0)^T$ if $l=m-n$.

Let us assume that there exist $j<n$ such that $X_{n-j}\neq
Y_{m-j}$. Let $k$ be the smallest index such that $X_{n-k}\neq
Y_{m-k}$. By similar reasoning as in the case $n\geq m$, we get
$u\neq \pm(1,0,0)^T$.
\end{proof}

If the input $w$ is not of the form $xcy$, $\mathcal{A}(\epsilon)$
rejects $w$ immediately.
\begin{Lm}
If the input $w=xcy$ and $x=y$, then the quantum state of
$\mathcal{A}(\epsilon)$ will evolve into $|q_0\rangle$ with
certainty after the loop \textbf{4}.
\end{Lm}

\begin{proof}
Let $x=x_1x_2\ldots x_l=y=y_1y_2\ldots y_l$ for some $l$. Starting
with the state $|q_0\rangle$, $\mathcal{A}(\epsilon)$ changes its
quantum state to  $|\psi\rangle$ after the loop \textbf{4}, where
\begin{equation}
 |\psi\rangle=U_{y_1}^{-1}U_{y_2}^{-1}\cdots U_{y_l}^{-1}U_{x_l}\cdots U_{x_2}U_{x_1}|q_0\rangle=U_{x_1}^{-1}U_{x_2}^{-1}\cdots U_{x_l}^{-1}U_{x_l}\cdots U_{x_2}U_{x_1}|q_0\rangle=|q_0\rangle.
\end{equation}
\end{proof}

\begin{Lm}\label{x-neq-y}
If the input $w=xcy$ and $x\neq y$, then $\mathcal{A}(\epsilon)$
rejects $w$ after the step \textbf{5} with the probability at
least $5^{-(m+n)}$.
\end{Lm}
\begin{proof}
Let $x=x_1x_2\cdots x_n$, $y=y_1y_2\cdots y_m$. Starting with
state $|q_0\rangle$, $\mathcal{A}(\epsilon)$ changes its quantum
state after the loop 4 to:
\begin{equation}
 |\psi\rangle=U_{y_1}^{-1}U_{y_2}^{-1}\cdots U_{y_m}^{-1}U_{x_n}\cdots U_{x_2}U_{x_1}|q_0\rangle.
\end{equation}
Let
$|\psi\rangle=\beta_0|q_0\rangle+\beta_1|q_1\rangle+\beta_2|q_2\rangle$.
According to Lemma \ref{XY}, $\beta_1^2+\beta_2^2>5^{-(n+m)}$. In
the step \textbf{5}, the quantum state $|\psi\rangle$ is measured,
$\mathcal{A}(\epsilon)$ then rejects $w$ with the probability
$p_r=\beta_1^2+\beta_2^2>5^{-(n+m)}$.
\end{proof}

Every execution of the steps \textbf{6}, \textbf{7} and \textbf{8}
leads to an acceptance with the probability $2^{-k(n+m+1)}$.

Let $k\geq \max\{\log{5}, \log{\frac{1}{\epsilon}}\}$. Assume that
the input is of the form $w=xcy$. If $x=y$, 2QCFA
$\mathcal{A}(\epsilon)$ always changes its quantum state to
$|q_0\rangle$ after the loop \textbf{4}, and
$\mathcal{A}(\epsilon)$ never rejects the input after the
measurement in the step $\mathbf{5}$. After the steps \textbf{6},
\textbf{7} and \textbf{8}, the probability of
$\mathcal{A}(\epsilon)$ accepting $w$ is $2^{-k(n+m+1)}$.
Repeating the whole iteration for $c2^{k(n+m+1)}$ times, the
accepting probability is
\begin{equation}
Pr[\mathcal{A}(\epsilon)\  {\it accepts}\  w]
=1-(1-2^{-k(n+m+1)})^{c2^{k(n+m+1)}},
\end{equation}
and this can be made arbitrarily close to 1 by selecting constant
$c$ appropriately.

Otherwise, if $x\neq y$, then, according to Lemma \ref{x-neq-y},
$\mathcal{A}(\epsilon)$ rejects the input after the step
$\mathbf{5}$ with the probability
\begin{equation}
P_{r} >5^{-(m+n)}
\end{equation}
and,  $\mathcal{A}(\epsilon)$ accepts the input after the steps
\textbf{6}, \textbf{7} and \textbf{8} with the probability

\begin{equation}
P_{a}=2^{-k(n+m+1)}.
\end{equation}
If we repeat the whole iteration indefinitely, the probability of
$\mathcal{A}(\epsilon)$ rejecting input $w$ is
\begin{equation}
Pr[\mathcal{A}(\epsilon)\  {\it rejects}\  w] =\sum_{i\geq
0}(1-P_a)^i(1-P_r)^iP_r
\end{equation}
\begin{equation}
=\frac{P_r}{P_a+P_r-P_aP_r}>\frac{P_r}{P_a+P_r}
\end{equation}
\begin{equation}
>\frac{5^{-(m+n)}}{2^{-k(n+m+1)}+5^{-(m+n)}}
\end{equation}
\begin{equation}
>\frac{1}{1+\epsilon}>1-\epsilon.
\end{equation}

If the input is $w$, then the step $\mathbf{1}$ takes
$\mathbf{O}(1)$ time, the steps $\mathbf{2}$ and $\mathbf{3}$ take
$\mathbf{O}(|w|)$ time, the loops $\mathbf{4}$ and $\mathbf{5}$
take $\mathbf{O}(|w|)$ time, the steps \textbf{6}, \textbf{7} and
\textbf{8} take $\mathbf{O}(|w|)$ time.  The expected number of
iterations
 is $\mathbf{O}(2^{k|w|})$. Hence, the expected running
time of $\mathcal{A}(\epsilon)$ is $\mathbf{O}(|w|2^{k|w|})$.
Obviously, the 2QCFA $\mathcal{A}(\epsilon)$ has three quantum
states. We just need $\mathbf{O}(k)$ classical states to simulate
$k$ coin-flips and calculate the outcomes, therefore
$CS(\mathcal{A}(\epsilon))\in
\mathbf{O}(\log{\frac{1}{\epsilon})}$.

\end{proof}

In Theorem \ref{Ltwin}, we have proved that $L^{twin}$ can be
recognized by 2QCFA. We will show that $L^{twin}$ can not be
recognized by 2PFA with error probability $\epsilon<1/2$. Thus
$L^{twin}$ is another witness of the fact that 2QCFA are more
powerful than their classical counterparts 2PFA.

\begin{Th}
There is no 2PFA recognizing $L^{twin}$ with error probability
$\epsilon<1/2$.
\end{Th}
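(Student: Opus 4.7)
The plan is to invoke Lemma \ref{not-in-2PFA} (the Dwork--Stockmeyer style separation lemma) with $A = L^{twin}$ and $B = \Sigma^{*}\setminus L^{twin}$. Any 2PFA recognizing $L^{twin}$ with error probability $\epsilon<1/2$ would in particular separate $A$ and $B$ with that error, so it suffices to verify the three hypotheses of the lemma.

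Take $I = \mathbb{Z}^+$ and, for each $m\in I$, set $W_m = \{a,b\}^m$. Then condition (1) is immediate since $|w|=m$ for every $w\in W_m$. For condition (2), note that $|W_m| = 2^m$, which dominates $m^k$ for every fixed integer $k$ once $m$ is large enough, so the required $m_k$ exists.

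The substantive step is condition (3): given distinct $w, w' \in W_m$, I need context words $u, v$ in $\Sigma^*$ that distinguish them with respect to membership in $L^{twin}$. I would take $u$ to be the empty string and $v = cw$. Then $uwv = wcw \in L^{twin}$ by definition, while $uw'v = w'cw$; since $|w'| = |w| = m$ and $w' \neq w$, the only way $w'cw$ could lie in $L^{twin}$ would be $w' = w$, a contradiction, so $uw'v \in B$. Hence the three conditions of Lemma \ref{not-in-2PFA} are satisfied, and no 2PFA separates $L^{twin}$ from its complement, which proves the theorem.

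There is really no hard step here; the only thing to be mindful of is choosing the distinguishing context so that one side is forced into $L^{twin}$ and the other is forced out, which the choice $v = cw$ accomplishes cleanly because it pins down the right half of the candidate word to equal $w$.
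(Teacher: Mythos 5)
Your proof is correct and follows essentially the same route as the paper: both apply Lemma \ref{not-in-2PFA} with $A=L^{twin}$, $B=\Sigma^*\setminus L^{twin}$, and the distinguishing contexts $u=\lambda$, $v=cw$. In fact you are slightly more explicit than the paper, which merely asserts the existence of suitable sets $W_m$, whereas you exhibit $W_m=\{a,b\}^m$ and verify conditions (1) and (2) directly.
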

\begin{proof}
Let $A=L^{twin}$ and $B=\overline{L^{twin}}=\Sigma^*\setminus A$.
Clearly, for each $m\in I$, there is a set $W_m\subseteq\Sigma^*$
satisfying conditions $(1)$ and $(2)$ of Lemma \ref{not-in-2PFA}.
For every $m\in I$ and every $w, w'\in W_m$ with $w\neq w'$, if we
take $u=\lambda$ (the empty word) and $v=cw$, then $uwv=wcw\in A$
and $uw'v=w'cw\in B$. According to Lemma \ref{not-in-2PFA}, there
is no 2PFA separating $A$ and $B$. Thus, there is no 2PFA
recognizing $L^{twin}$ and the Theorem has been proved.
\end{proof}

For an alphabet $\Sigma$ and an $m\in {\mathbb{Z}}^+$, let
$L(m)=\{w\mid |w|=m\}$.
\begin{Lm}[\cite{Yak10}]\label{Lm}
For any $\epsilon<1/2$, there exists a $7$-state
1QFA$^\circlearrowleft$ $\mathcal{A}(m,\epsilon)$ which accepts
any $w\in L(m)$ with certainty, and rejects any $w\notin L(m)$
with probability at least $1-\epsilon$. Moreover, the expected
runtime of the $\mathcal{A}(m,\epsilon)$ on $w$ is
$\mathbf{O}(2^{|w|}|w|)$.
\end{Lm}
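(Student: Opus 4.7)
The plan is to adapt the rotation idea from Theorem \ref{Am} to the 1QFA$^\circlearrowleft$ model. A 1QFA$^\circlearrowleft$ reads its input once from left to right, but upon reaching the right end-marker it may restart (i.e.\ reset its quantum state and return the head to the beginning) rather than halt. My construction would start with two distinguished quantum states $|q_0\rangle,|q_1\rangle$ and a rotation angle $\alpha=\sqrt{2}\pi$, together with Hadamard-like operations on a few auxiliary basis states used to produce quantum coin flips.

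The first stage of every pass would be a direct translation of the length test from Theorem \ref{Am}: on the left end-marker apply the rotation by angle $m\alpha$ to the initial state $|q_0\rangle$; on each input symbol apply the rotation by $-\alpha$; at the right end-marker perform a measurement in the basis $\{|q_0\rangle,|q_1\rangle\}$. By the calculations of Lemmas \ref{m=n} and \ref{m!=n}, this measurement yields $|q_0\rangle$ with certainty if $|w|=m$, and yields $|q_1\rangle$ with probability at least $1/(2(|w|-m)^2+1)$ otherwise. An outcome of $|q_1\rangle$ causes immediate rejection; otherwise the machine enters the balancing stage.

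The balancing stage performs quantum coin flips (applying Hadamards to an auxiliary subsystem once per input symbol during the scan and then measuring) in such a way that the conditional probability of accepting in this pass is $\mathbf{\Theta}(2^{-|w|})$ rather than $1$; if these coin flips do not trigger acceptance, the machine restarts. Combining the two stages, for $w\in L(m)$ the per-pass accept probability is $\mathbf{\Theta}(2^{-|w|})$ and the reject probability is $0$, while for $w\notin L(m)$ the reject-to-accept ratio per pass is at least $\mathbf{\Omega}(2^{|w|}/(2(|w|-m)^2+1))$, which after infinitely many passes gives a total reject probability of at least $1-\epsilon$ provided the balancing constant is chosen appropriately. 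The expected number of passes is $\mathbf{O}(2^{|w|})$ and each pass costs $\mathbf{O}(|w|)$, giving the claimed bound $\mathbf{O}(2^{|w|}|w|)$.

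The hard part, and the technical contribution of Yakaryilmaz and Say in \cite{Yak10}, is to fit all of this, namely the rotation register, the coin-flip register, the restart/accept/reject branching and the control of the balancing ratio, into only $7$ basis states of a single Hilbert space while respecting the strict one-way unitary-then-measure format of the 1QFA$^\circlearrowleft$ model. The count of $7$ arises from a careful tensor-product bookkeeping: two states for the rotation subsystem, a small number of additional basis states for the coin-flip subsystem, and a decoupled ``halt/restart'' subsystem, all required to evolve coherently under the same per-symbol unitary. Once such a $7$-dimensional implementation is exhibited, the correctness and runtime analysis reduce to the two-stage argument sketched above.
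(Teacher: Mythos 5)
This lemma is stated in the paper as an imported result, attributed to \cite{Yak10}; the paper supplies no proof of it at all (it is used as a black box in Theorem \ref{Th-lm-2QCFA}), so there is no internal argument to compare yours against. Judged on its own terms, your proposal has a genuine gap: the entire content of the lemma is the existence of a \emph{$7$-state one-way} machine, and your write-up explicitly defers exactly that construction (``the hard part'') to the very reference being cited. A proof that says ``the rotation test plus a restart mechanism can presumably be packed into $7$ states as in \cite{Yak10}'' proves nothing beyond the citation itself; the two-stage outline you give is the easy part and is already visible in the paper's Theorem \ref{Am}.

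There is also a concrete quantitative error in your balancing analysis. You set the per-pass acceptance probability to $\mathbf{\Theta}(2^{-|w|})$ and the per-pass rejection probability (from Lemma \ref{m!=n}) to at least $1/(2(|w|-m)^2+1)$, and claim the resulting reject-to-accept ratio suffices. But $L(m)$ here carries no promise: for a short input, say $|w|=1$ with $m$ large, your bounds give $P_a=\mathbf{\Theta}(1)$ while $P_r$ may be as small as $\mathbf{\Theta}(1/m^2)$, so $P_r/(P_a+P_r)=\mathbf{O}(1/m^2)$ and the machine would \emph{accept} such a word with probability close to $1$. A fixed ``balancing constant'' cannot repair this, since it may depend neither on $m$ nor on $|w|$ in a $7$-state machine; the actual construction must make the acceptance amplitude decay with $m$ as well (e.g.\ via the end-marker transition), which is precisely the kind of detail your sketch omits. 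Note that the paper's own Theorem \ref{Am} avoids this issue only because its promise problem restricts no-instances to $|w|\geq m/2$, a restriction that is absent from the present lemma.
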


\begin{Lm}[\cite{Yak10}]\label{1QFA-2QCFA}
For any 1QFA$^\circlearrowleft$ $\mathcal{A}_1$ with $n$ quantum
states and expected running time $t(|w|)$, there exists a 2QCFA
$\mathcal{A}_2$ with $n$ quantum states, $\mathbf{O}(n)$ classical
states, and expected running time $\mathbf{O}(t(|w|))$, such that
$\mathcal{A}_2$ accepts every input string $w$ with the same
probability that $\mathcal{A}_1$ accepts $w$.
\end{Lm}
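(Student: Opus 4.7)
The plan is to build $\mathcal{A}_2$ as a direct step-by-step simulation of $\mathcal{A}_1$, re-using the same Hilbert space $\mathcal{H}(Q)$ (so that $QS(\mathcal{A}_2)=n$) while letting the 2DFA-style classical controller of $\mathcal{A}_2$ drive the tape head and implement the restart mechanism. Concretely, $\mathcal{A}_2$ starts on $\ |\hspace{-1.5mm}c$ with its quantum register prepared in $|q_0\rangle$ and then sweeps left-to-right; at each cell it applies to the register exactly the unitary or projective measurement that $\mathcal{A}_1$ associates with the symbol currently scanned. The classical outcomes of each measurement are partitioned, as in $\mathcal{A}_1$, into accept, reject, continue, and restart branches, and the classical transition function $\delta$ of $\mathcal{A}_2$ reads off the branch and either halts (entering a state of $S_{acc}$ or $S_{rej}$), steps right, or enters the rewind procedure described below.

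The only nontrivial branch is the restart. When an outcome calls for a restart, $\mathcal{A}_2$ enters a dedicated classical ``rewind'' state which moves the head one square left per step until the left end-marker is re-read; at that point the classical state is reset to $s_0$ and the quantum register is re-initialised to $|q_0\rangle$ by a fixed measurement-and-rotate subroutine. Apart from the rewind mode, the classical component only has to record (i) which branch label was just produced by the most recent measurement, and (ii) which of a constant number of auxiliary subroutines (e.g.\ the reset) is currently running. Since a projective measurement on an $n$-dimensional quantum register has at most $n$ distinct outcomes, this information fits into $\mathbf{O}(n)$ classical states, as required.

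For correctness, I would argue by induction on the step count that, at every moment just before $\mathcal{A}_2$ applies a fresh unitary or measurement to its quantum register, the joint distribution over (classical mode, head position, quantum register) of $\mathcal{A}_2$ faithfully mirrors the distribution of $\mathcal{A}_1$ over (simulated step within the current pass, quantum register). Because the accept and reject decisions of $\mathcal{A}_2$ are triggered by exactly the same measurement branches as those of $\mathcal{A}_1$, this coupling yields $\Pr[\mathcal{A}_2 \text{ accepts } w]=\Pr[\mathcal{A}_1 \text{ accepts } w]$.

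For the runtime, one simulated pass of $\mathcal{A}_1$ costs at most $|w|+\mathbf{O}(1)$ classical steps in $\mathcal{A}_2$, and every restart adds at most $|w|+\mathbf{O}(1)$ rewind steps; hence each $\mathcal{A}_1$-step is charged $\mathbf{O}(1)$ classical steps of $\mathcal{A}_2$, giving an expected runtime in $\mathbf{O}(t(|w|))$. The delicate point I would handle most carefully is proving that the rewind loop is \emph{quantumly transparent}, i.e.\ that moving the head left while updating only classical state does not act on the quantum register, and that the measurement-and-rotate reset to $|q_0\rangle$ does not leak information or otherwise alter the acceptance probability of subsequent passes; this is precisely what makes the linear-length rewind loop simulate the 1QFA$^\circlearrowleft$'s ``instantaneous'' restart without inflating the expected runtime beyond a constant factor.
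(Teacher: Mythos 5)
The paper does not prove this lemma; it is imported verbatim from \cite{Yak10}, and your construction is exactly the simulation used there: re-use the $n$-dimensional quantum register, let the classical controller sweep right applying the 1QFA$^\circlearrowleft$'s operations, and on a restart outcome rewind to the left end-marker and re-initialise the register via a measure-then-permute reset (which is where the $\mathbf{O}(n)$ classical states and the amortised constant-factor time overhead come from). Your argument is correct and matches the intended proof.
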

\begin{Th} \label{Th-lm-2QCFA}
For any $\epsilon<1/2$, there exists a 2QFA
$\mathcal{A}(m,\epsilon)$ which accepts any $w\in L(m)$ with
certainty, and rejects any $w\notin L(m)$ with probability at
least $1-\epsilon$. Moreover, $QS(\mathcal{A}(m,\epsilon))=7$,
$CS(\mathcal{A}(m,\epsilon))$ is a constant, and the expected
runtime of the $\mathcal{A}(m,\epsilon)$ on $w$ is
$\mathbf{O}(2^{|w|}|w|)$.
\end{Th}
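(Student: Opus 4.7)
The plan is to obtain this theorem as an essentially immediate combination of the two preceding lemmas, with no new construction required. First I would invoke Lemma \ref{Lm} to produce, for the given $m$ and $\epsilon<1/2$, a $7$-state $1QFA^\circlearrowleft$ $\mathcal{A}_1(m,\epsilon)$ that accepts every $w\in L(m)$ with probability $1$, rejects every $w\notin L(m)$ with probability at least $1-\epsilon$, and runs in expected time $\mathbf{O}(2^{|w|}|w|)$.

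Next I would apply Lemma \ref{1QFA-2QCFA} to $\mathcal{A}_1(m,\epsilon)$. This yields a 2QCFA $\mathcal{A}(m,\epsilon)$ with the same number of quantum states (namely $7$), with $\mathbf{O}(7)=\mathbf{O}(1)$ classical states, and with expected running time $\mathbf{O}(2^{|w|}|w|)$, such that $\mathcal{A}(m,\epsilon)$ accepts each input $w$ with exactly the same probability as $\mathcal{A}_1(m,\epsilon)$. Consequently the one-sided error guarantee is preserved: $w\in L(m)$ is still accepted with certainty, and $w\notin L(m)$ is still rejected with probability at least $1-\epsilon$.

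Finally I would just collect the parameters: $QS(\mathcal{A}(m,\epsilon))=7$ by construction, $CS(\mathcal{A}(m,\epsilon))\in\mathbf{O}(1)$ since the bound from Lemma \ref{1QFA-2QCFA} is linear in the number of quantum states of the simulated machine, and the runtime bound $\mathbf{O}(2^{|w|}|w|)$ is inherited directly from the simulation lemma. Since the statement of the theorem asks for exactly these three quantities, this completes the argument.

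There is really no substantive obstacle here; the only place a subtlety could creep in is checking that the simulation in Lemma \ref{1QFA-2QCFA} preserves the \emph{one-sided} nature of the error (not merely the acceptance probability up to some additive slack). Because Lemma \ref{1QFA-2QCFA} is stated as preserving the acceptance probability of every input exactly, both the ``accept with certainty'' clause on $L(m)$ and the ``reject with probability $\geq 1-\epsilon$'' clause on its complement carry over verbatim, and no further estimate is needed.
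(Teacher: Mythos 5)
Your proposal is correct and is exactly the paper's argument: the paper also proves this theorem by combining Lemma \ref{Lm} with the simulation result of Lemma \ref{1QFA-2QCFA} and reading off the parameters. No further comment is needed.
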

\begin{proof}
It follows from Lemma \ref{Lm} and Lemma \ref{1QFA-2QCFA}.
\end{proof}

\begin{Th}
For any $\epsilon<1/2$, there exists a 2QFA
$\mathcal{A}(m,\epsilon)$ which accepts any $w\in L^{twin}(m)$
with certainty, and rejects any $w\notin L^{twin}(m)$ with
probability at least $1-\epsilon$. Moreover,
$QS(\mathcal{A}(m,\epsilon))$ is a constant,
$CS(\mathcal{A}(m,\epsilon))\in
\mathbf{O}(\log{\frac{1}{\epsilon})}$, and the expected running
time of $\mathcal{A}(m,\epsilon)$ on $w$ is
$\mathbf{O}(|w|2^{k|w|})$.
\end{Th}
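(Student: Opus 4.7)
The plan is to express $L^{twin}(m)$ as an intersection of two languages whose 2QCFA recognizers have already been constructed, and then to combine these recognizers via Lemma~\ref{Lm_2qcfa_inter}. Setting $L(n) = \{w \in \Sigma^* : |w| = n\}$, I first observe that $L^{twin}(m) = L^{twin} \cap L(2m+1)$: any $w \in L^{twin}$ of length $2m+1$ must be of the form $xcy$ with $x=y$, which forces $|x|=|y|=m$ and hence $w \in L^{twin}(m)$; the reverse inclusion is immediate from the definition.

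For the first factor, Theorem~\ref{Ltwin} applied with error $\epsilon_1 = \epsilon/2$ supplies a 2QCFA $\mathcal{A}_1$ recognizing $L^{twin}$ with one-sided error $\epsilon_1$, a constant number of quantum states, $\mathbf{O}(\log(1/\epsilon))$ classical states, and expected running time $\mathbf{O}(|w| 2^{k|w|})$. For the second factor, Theorem~\ref{Th-lm-2QCFA} applied with length parameter $2m+1$ and error $\epsilon_2 = \epsilon/2$ yields a 2QCFA $\mathcal{A}_2$ recognizing $L(2m+1)$ with $7$ quantum states, a constant number of classical states, and expected running time $\mathbf{O}(|w| 2^{|w|})$.

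Invoking Lemma~\ref{Lm_2qcfa_inter} — in the one-sided, exponential-expected-time form captured by the remark immediately following that lemma — then yields a 2QCFA $\mathcal{A}(m,\epsilon)$ for $L^{twin}(m)$ with one-sided error at most $\epsilon_1 + \epsilon_2 - \epsilon_1 \epsilon_2 < \epsilon$. The resource bookkeeping from that remark gives $QS(\mathcal{A}(m,\epsilon)) = QS(\mathcal{A}_1) + QS(\mathcal{A}_2)$, which is a constant; $CS(\mathcal{A}(m,\epsilon)) = CS(\mathcal{A}_1) + CS(\mathcal{A}_2) + QS(\mathcal{A}_1) \in \mathbf{O}(\log(1/\epsilon))$; and the expected running time is the sum of the two individual bounds, dominated by the $\mathbf{O}(|w| 2^{k|w|})$ term as soon as $k \geq 1$, which matches the claimed bound.

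I do not anticipate any real obstacle: the result is essentially a direct corollary of the two previously constructed automata combined through the intersection lemma. The only point worth double-checking is that Lemma~\ref{Lm_2qcfa_inter}, although originally stated for languages, is being applied here in precisely the setting where both components $L^{twin}$ and $L(2m+1)$ are bona fide languages (rather than promise problems), so the extension remarked after Theorem~\ref{meq} is not even required in the present argument.
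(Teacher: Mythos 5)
Your proposal is correct and follows exactly the paper's own argument: the same decomposition $L^{twin}(m)=L^{twin}\cap L(2m+1)$, the same appeals to Theorem~\ref{Ltwin} and Theorem~\ref{Th-lm-2QCFA} with $\epsilon_1=\epsilon_2=\epsilon/2$, and the same application of Lemma~\ref{Lm_2qcfa_inter} with identical resource bookkeeping. Your added remarks (verifying the decomposition and noting that both components are genuine languages, so no promise-problem extension of the intersection lemma is needed) are accurate but do not change the route.
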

\begin{proof}
Let the alphabet $\Sigma=\{a,b,c\}$. Obviously,
$L^{twin}(m)=L^{twin}\cap L(2m+1)$. According to Theorem
\ref{Ltwin}, for any $\epsilon_1<1/2$, there is a 2QCFA
$\mathcal{A}_1(\epsilon_1)$ recognizes $L^{twin}$ with one-sided
error $\epsilon_1$, and $QS(\mathcal{A}_1(\epsilon_1))=3$,
$CS(\mathcal{A}_1(\epsilon_1))\in\mathbf{O}(\log{\frac{1}{\epsilon_1})}$
and $T(\mathcal{A}_1(\epsilon_1)\in\mathbf{O}(|w|2^{k|w|})$ where
$k$ is a constant. According to Theorem \ref{Th-lm-2QCFA}, for any
$\epsilon_2<1/2$, there is a 2QCFA $\mathcal{A}_2(m,\epsilon)$
recognizes $L(2m+1)$ with one-sided error $\epsilon_2$, and
$QS(\mathcal{A}_2(m,\epsilon))=7$, $CS(\mathcal{A}_2(m,\epsilon))$
is a constant and
$T(\mathcal{A}_2(m,\epsilon))\in\mathbf{O}(2^{|w|}|w|)$. For any
$\epsilon<1/2$, let $\epsilon_1=\epsilon/2$ and
$\epsilon_2=\epsilon/2$. According to Lemma \ref{Lm_2qcfa_inter},
there is a 2QCFA $\mathcal{A}(m,\epsilon)$ recognizes
$L^{twin}\cap L(2m+1)$ with one-sided error $\epsilon$, where
$QS(\mathcal{A}(m,\epsilon))=QS(\mathcal{A}_1(\epsilon_1))+
QS(\mathcal{A}_2(m,\epsilon))=10$,
$CS(\mathcal{A}(m,\epsilon))=CS(\mathcal{A}_1(\epsilon_1))+CS(\mathcal{A}_2(m,\epsilon))+QS(\mathcal{A}_1(\epsilon_1))\in
\mathbf{O}(\log{\frac{1}{\epsilon})}$ and
$T(\mathcal{A}(m,\epsilon))=T(\mathcal{A}_1(\epsilon_1))+T(\mathcal{A}_2(m,\epsilon))\in\mathbf{O}(|w|2^{k|w|})$.
Hence, the theorem has been proved.
\end{proof}

For a fix $m\in {\mathbb{Z}}^+$, $L^{twin}(m)$ is finite, and thus
there exists a DFA accepting the language $L^{twin}(m)$. In the
following we use methods and results of {\it communication
complexity} to derive a lower bound on the number of states of
finite automata accepting the language $L^{twin}(m)$

\begin{Th}\label{DFAforAmtwin}
For any $m\in {\mathbb{Z}}^+$, any DFA recognizing $L^{twin}(m)$
has at least $2^m$ states.
\end{Th}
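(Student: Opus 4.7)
The plan is to derive the lower bound by reducing the equality problem on $m$-bit strings to the state-distinguishing problem for any DFA that recognizes $L^{twin}(m)$, and then invoking Lemma \ref{ComC-EQ}. The key observation is that in a string of the form $xcy$, the two halves are completely separated by the marker $c$, so a DFA's behaviour on the second half depends on the first half only through the state reached at the marker — precisely the set-up needed to turn a small DFA into a short one-round protocol for $EQ$.

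Concretely, I would assume toward a lower bound that a DFA $M$ with state set $Q$, $|Q|=k$, recognizes $L^{twin}(m)$. I then design a deterministic two-party protocol computing $EQ$ on $\{a,b\}^m\times\{a,b\}^m$ as follows. Alice, holding $x$, simulates $M$ from its initial state on $x$ alone, reaching some state $s_x\in Q$, and sends the name of $s_x$ to Bob. This message costs at most $\lceil\log k\rceil$ bits. Bob, holding $y$, resumes the simulation of $M$ from state $s_x$ on the suffix $cy$ and outputs $1$ iff $M$ ends in an accepting state. Since $M$ recognizes $L^{twin}(m)$, the protocol outputs $1$ iff $xcy\in L^{twin}(m)$, iff $x=y$, so it indeed computes $EQ(x,y)$.

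Counting communication: Alice's single message contributes $\lceil\log k\rceil$ bits and Bob's output contributes one bit, so the total is at most $\lceil\log k\rceil+1$. By Lemma \ref{ComC-EQ}, $D(EQ)=m+1$, so we must have $\lceil\log k\rceil+1\ge m+1$, i.e.\ $k\ge 2^m$, giving the claim.

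I do not expect a real obstacle here: the argument is a textbook reduction from DFA state complexity to deterministic communication complexity, and all the machinery is already in place through Lemma \ref{ComC-EQ}. The only point requiring mild care is the bit-accounting convention — whether one charges the one output bit to the protocol cost — but since Lemma \ref{ComC-EQ} yields the exact value $m+1$, this accounting is what produces the clean bound $k\ge 2^m$ rather than a weaker or slightly stronger constant.
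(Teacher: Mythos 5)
Your reduction is exactly the paper's proof: Alice simulates the DFA on $x$ and sends the reached state (at most $\lceil\log k\rceil$ bits), Bob continues on $cy$ and returns the one-bit answer, and comparing against $D(EQ)=m+1$ from Lemma \ref{ComC-EQ} yields $k\ge 2^m$. No differences worth noting.
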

\begin{proof}
Assume that a DFA $\mathcal{A}$ recognizes $L^{twin}(m)$.
 For an input string $xcy$ of
$L^{twin}(m)$ let us consider the following communication protocol
between Alice and Bob with Alice having $x$ at the beginning and
Bob having $y$ at the beginning. A protocol can be derived for
$EQ(x,y)$ as follows:
  Alice first simulates the path taken by DFA $\mathcal{A}$ on her input $x$.  She then sends the name of the last state $s$ in this path to Bob, which needs $\log{(|S|)}$ bits, where $S$ is the set of states in DFA $\mathcal{A}$. Afterwards, Bob simulates DFA $\mathcal{A}$, starting from the state $s$, on input $cy$. At last, Bob sends the result to Alice, if $w$ is accepted, bob sends 1, otherwise 0.  All together, they get a simulation of DFA  $\mathcal{A}$ on the input $w=xcy$. By assumption, if $w=xcy$ is accepted by DFA $\mathcal{A}$ then $EQ(x,y)=1$ while if $w$ is rejected then $EQ(x,y)=0$. Therefore, we have $D(EQ)\leq \log{(|S|)}+1$. According to Lemma \ref{ComC-EQ}, we have
\begin{equation}
    D(EQ)=m+1\leq \log{(|S|)}+1
\end{equation}
\begin{equation}
   \Rightarrow m\leq \log{(|S|)} \Rightarrow |S|\geq 2^{m}.
\end{equation}
\end{proof}

\begin{Th}\label{S-result-2}
For any $m\in {\mathbb{Z}}^+$, any 2DFA, 2NFA and polynomial
expected running time 2PFA recognizing $L^{twin}(m)$ have at least
$\sqrt{m}$, $\sqrt{m}$ and $\sqrt[3]{m/b}$ states, where $b$ is a
constant.
\end{Th}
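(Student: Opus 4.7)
The plan is to mirror exactly the argument used for the analogous Theorem on $A^{eq}(m)$ earlier in the paper, but now substituting the DFA lower bound from Theorem \ref{DFAforAmtwin} (which gives $2^m$ instead of $2m+2$). The three cases (2DFA, 2NFA, 2PFA) are all reductions: simulate the small two-way machine by a one-way DFA using the appropriate simulation lemma, then invoke the lower bound on DFA size for $L^{twin}(m)$.

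First, for a hypothetical $n_1$-state 2DFA recognizing $L^{twin}(m)$, I would apply Lemma \ref{2DFAtoDFA} to obtain a DFA of size $(n_1+1)^{n_1+1}$. By Theorem \ref{DFAforAmtwin} this must be at least $2^m$, giving $(n_1+1)\log(n_1+1) \geq m$, which (for $n_1 \geq 3$, a case that can be handled separately as in the earlier proof) implies $n_1^2 > m$ and hence $n_1 > \sqrt{m}$. For an $n_2$-state 2NFA, apply Lemma \ref{2NFAtoDFA} to get a DFA of size $2^{(n_2-1)^2+n_2}$, so $(n_2-1)^2 + n_2 \geq m$, from which $n_2^2 > m$ follows and thus $n_2 > \sqrt{m}$.

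For the polynomial expected running time 2PFA case with error probability $\epsilon < 1/2$, I would apply Lemma \ref{2PFAtoDFA} to obtain a DFA of size at most $n_3^{bn_3^2}$ for a suitable constant $b > 0$. Combining with Theorem \ref{DFAforAmtwin} yields $n_3^{bn_3^2} \geq 2^m$, i.e., $b n_3^2 \log n_3 \geq m$, and after absorbing the $\log n_3$ factor this gives $n_3^3 > m/b$, hence $n_3 > \sqrt[3]{m/b}$.

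No step here is a real obstacle — the entire argument is a transcription of the earlier proof of Theorem \ref{S-result-2} for $A^{eq}(m)$, with $\log m$ replaced by $m$ throughout because the lower bound for DFA has jumped from linear ($2m+2$) to exponential ($2^m$). The only minor care needed is the small-state case of the 2DFA bound (handling the transition from $(n_1+1)\log(n_1+1)$ to $n_1^2$), which I would dispose of by the same $n_1 \geq 3$ base case check used before.
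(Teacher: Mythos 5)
Your proposal is correct and is essentially identical to the paper's own proof: the paper likewise assumes an $n$-state 2DFA, 2NFA, or polynomial-time 2PFA, converts it to a DFA via Lemmas \ref{2DFAtoDFA}, \ref{2NFAtoDFA}, \ref{2PFAtoDFA} respectively, and then applies the $2^m$ lower bound of Theorem \ref{DFAforAmtwin} with the same elementary manipulations (including the $n_1\geq 3$ check to pass from $(n_1+1)\log(n_1+1)$ to $n_1^2$, and bounding $\log n_3$ by $n_3$ in the 2PFA case). Nothing further is needed.
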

\begin{proof}
Assume that an $n_1$-state 2DFA $\mathcal{A}$ recognizes
$L^{twin}(m)$. It is easy to prove that $n_1\geq 3$. According to
Lemma \ref{2DFAtoDFA}, there is a DFA recognizes $L^{twin}(m)$
with $(n_1+1)^{n_1+1}$ states. According to Theorem
\ref{DFAforAmtwin}, we have
\begin{equation}
(n_1+1)^{n_1+1}\geq 2^m \Rightarrow (n_1+1)\log{(n_1+1)}\geq m.
\end{equation}
Because $n\geq 3$, we get
\begin{equation}
n_1^2>(n_1+1)\log{(n_1+1)}> m \Rightarrow n_1> \sqrt{m}.
\end{equation}

Assume that an $n_2$-state 2NFA $\mathcal{A}$ recognizes
$L^{twin}(m)$.  According to Lemma \ref{2NFAtoDFA}, there is a DFA
recognizes $L^{twin}(m)$ with $2^{(n_2-1)^2+n_2}$ states.
According to Theorem \ref{DFAforAmtwin}, we have
\begin{equation}
2^{(n_2-1)^2+n_2}\geq 2^m \Rightarrow (n_2-1)^2+n_2\geq m
\end{equation}
\begin{equation}
\Rightarrow n_2^2> m \Rightarrow n_2> \sqrt{m}.
\end{equation}

Assume that an $n_3$-state 2PFA $\mathcal{A}$ recognizes
$L^{twin}(m)$ with an error probability $\epsilon<1/2$ and within
a polynomial expected running time. According to Lemma
\ref{2PFAtoDFA}, there is a DFA recognizes $L^{twin}(m)$ with
$n_3^{bn_3^2}$ states, where $b>0$ is a constant. According to
Theorem \ref{DFAforAmtwin}, we have
\begin{equation}
n_3^{bn_3^2}\geq 2^m \Rightarrow bn_3^2\log{n_3}\geq m
\end{equation}
\begin{equation}
\Rightarrow n_3^3> m/ b\Rightarrow n_3>\sqrt[3]{m/b}.
\end{equation}

\end{proof}

\section{Concluding remarks}

2QCFA were introduced by Ambainis and Watrous \cite{Amb02}. In
this paper, we investigated state succinctness of 2QCFA. We have
showed that 2QCFA can be more space-efficient than their classical
counterparts DFA, 2DFA, 2NFA and polynomial expected running time
2PFA, where the superiority cannot be bounded. For any $m\in
{\mathbb{Z}}^+$ and any $\epsilon<1/2$, we have proved that there
is a promise problem $A^{eq}(m)$ that can be solved by a 2QCFA
with one-sided error $\epsilon$ in a polynomial expected running
time with a constant number of quantum states and
$\mathbf{O}(\log{\frac{1}{\epsilon})}$ classical states, whereas
the sizes of the corresponding DFA, 2DFA, 2NFA and polynomial
expected running time 2PFA are at least $2m+2$, $\sqrt{\log{m}}$,
$\sqrt{\log{m}}$ and $\sqrt[3]{(\log m)/b}$. For any $m\in
{\mathbb{Z}}^+$ and any $\epsilon<1/2$, we have also showed that
there exists a 2QCFA recognizing the language $L^{twin}(m)$ with
one-sided error $\epsilon$ in an exponential expected running time
with a constant number of quantum states and
$\mathbf{O}(\log{\frac{1}{\epsilon})}$ classical states, whereas
the sizes of the corresponding DFA, 2DFA, 2NFA and polynomial
expected running time 2PFA are at least $2^m$, $\sqrt{m}$,
$\sqrt{m}$ and $\sqrt[3]{m/b}$.

To conclude, we formulate some open problems:
\begin{enumerate}
 \item Can the result related to a promise problem $A^{eq}(m)$ be improved to deal with a language?
  \item In Theorem \ref{S-result-2}, we gave a bound on a polynomial expected running time 2PFA. What is the bound when the expected running time is exponential?
\end{enumerate}

\section*{Acknowledgements}

This work is supported in part by the National Natural Science
Foundation (Nos. 60873055, 61073054, 61100001), the Natural
Science Foundation of Guangdong Province of China (No.
10251027501000004),  the Research Foundation for the Doctoral
Program of Higher School of Ministry of Education of China (Nos.
20100171110042, 20100171120051), the Fundamental Research Funds
for the Central Universities (Nos. 10lgzd12,11lgpy36), the China
Postdoctoral Science Foundation project (Nos. 20090460808,
201003375), and the project of SQIG at IT, funded by FCT and EU
FEDER projects QSec PTDC/EIA/67661/2006,  FCT project
PTDC/EEA-TEL/103402/2008 QuantPrivTel, FCT
PEst-OE/EEI/LA0008/2011, AMDSC UTAustin/MAT/0057/2008, IT Project
QuantTel, Network of Excellence, Euro-NF.

\end{document}